\newcommand{\lgu}{| \hspace{-0.18mm}u \hspace{-0.18mm}|}
\newcommand{\lgw}{| \hspace{-0.18mm}w \hspace{-0.18mm}|}
\newcommand{\lgN}{| \hspace{-0.18mm}N \hspace{-0.18mm}|}
\newcommand{\lgx}[1]{| \hspace{-0.18mm}{#1} \hspace{-0.18mm}|}
\newcommand{\window}{window}
\newcommand{\Window}{Window}
\newcommand{\card}{\mathrm{Card}}
\newcommand{\Ba}{\ensuremath{\Box_a}}
\newcommand{\Bb}{\ensuremath{\Box_b}}
\newcommand{\Bam}{\ensuremath{\Box_a}^{\,\mhyphen}}
\newcommand{\Bbm}{\ensuremath{\Box_b}^{\,\mhyphen}}
\newcommand{\Ra}{{Ra}}
\newcommand{\Rb}{{Rb}}
\newcommand{\Kab}{\ensuremath{\mathbf{\K_{a,b}}}}
\newcommand{\Kdeab}{\ensuremath{\mathbf{KDe_{a,b}}}}
\newcommand{\KQab}{\ensuremath{\mathbf{{\Kab}.4(a)}}}
\newcommand{\KQQab}{\ensuremath{\mathbf{{\Kab}.4(a).4(b)}}}
\newcommand{\KQdeab}{\ensuremath{\mathbf{KDe_{a,b}.4(a)}}}
\newcommand{\KQQdeab}{\ensuremath{\mathbf{KDe_{a,b}.4(a).4(b)}}}
\newcommand{\K}{\ensuremath{\mathbf{K}}}
\newcommand{\pipe}{\hspace{-0.21mm}|\hspace{-0.21mm}}
\newcommand{\inc}{\subseteq}
\newcommand{\upw}{\chi(w)}
\newcommand{\wpp}{\ensuremath{\tilde{w}}}
\newcommand{\Tpp}{\ensuremath{\tilde{T}}}
\newcommand{\nextW}{\mbox{\texttt{NextW}}}
\newcommand{\satW}{\mbox{\texttt{SatW}}}
\newcommand{\chooseW}{\mbox{\texttt{ChooseW}}}
\newcommand{\chooseCCS}{\mbox{\texttt{ChooseCCS}}}
\newcommand{\sat}{\mbox{\texttt{Sat}}}
\newcommand{\CCS}{\mbox{\texttt{CCS}}}
\newcommand{\SF}{\mbox{\texttt{SF}}}
\newcommand{\CSF}{\mbox{\texttt{CSF}}}
\newcommand{\KQsat}{\mbox{\texttt{K4sat}}}
\newcommand{\true}{\mbox{\texttt{True}}}
\newcommand{\all}{\mbox{\texttt{all}}}
\newcommand{\algand}{\mbox{\texttt{and}}}
\def\PSPACE{\mathbf{PSPACE}}
\def\card{\mathtt{Card}}
\def\NPSPACE{\mathbf{NPSPACE}}
\def\NEXPTIME{\mathbf{NEXPTIME}}
\def\coNEXPTIME{\mathbf{coNEXPTIME}}
\def\CPL{\mathbf{CPL}}
\def\K{\mathbf{K}}
\def\At{\mathbf{At}}
\def\Fo{\mathbf{Fo}}
\def\Axiom{\mathbf{A}}
\def\Rule{\mathbf{R}}
\mathchardef\mhyphen="2D
\begin{document}
\title{PSPACE-completeness of bimodal transitive weak-density logic}
\author{Philippe Balbiani \and Olivier Gasquet}
%
%
\institute{
Institut de recherche en informatique de Toulouse
\\
CNRS-INPT-UT3
}
\maketitle
\begin{abstract}
{\Window}s have been introduce in \cite{BalGasq25} as a tool for designing polynomial algorithms to check satisfiability of a bimodal logic of weak-density. In this paper, after revisiting the ``folklore'' case of bimodal $\K4$ already treated in \cite{Halpern} but which is worth a fresh review,  we show that {\window}s allow to polynomially solve the satisfiability problem  when adding transitivity to weak-density, by mixing algorithms for bimodal $\K$ together with {\window}s-approach. The conclusion is that both satisfiability and validity are $\PSPACE$-complete for these logics. 
\end{abstract}
\keywords{Modal logics of density \and Satisfiability problem \and Complexity}
%
%
%
\section*{Introduction}

\paragraph{Combining logics} For two normal modal logics $L_a$ and $L_b$, we write $L_a\oplus L_b$ to denote the smallest bimodal logic with two independent modal operators, say $\Box_a$ and $\Box_b$. The complexity of such logics has been addressed in many articles like \cite{Baldoni2,Ladner77,Spaan93}. 
For modal logics defined by \emph{grammar axioms} of the form $\langle a_1\rangle\ldots\langle a_m\rangle p\rightarrow \langle b_1\rangle\ldots\langle b_n\rangle p$, the satisfiability problem is known to be undecidable in general~\cite{FarinasdelCerro1988-FARGL}.
yet, for some specific \emph{grammar logics}, the satisfiability problem is simply known to be decidable like right regular inclusion modal logics of \cite{Baldoni2}, and at the time being, the complexity of $\K+\lozenge p\rightarrow\lozenge\lozenge p$ is only known to be in $\NEXPTIME$. On another hand, some simple grammar logics are known to be in $\PSPACE$, for instance $\K+\lozenge p\leftrightarrow\lozenge\lozenge p$~\cite{FAR-GAS99}. \footnote{We do not mention the word of \cite{Lyon24} as it contains a major and irreparable flaw as discussed in \cite{Gasquet:critique:QD}.}

In this paper, we study the complexity of some modal logics defined by axioms of the form $\langle a\rangle p\rightarrow\langle a\rangle\langle b\rangle p$.
By using a tableau-like approach, we prove that the satisfiability problem of the bimodal logics of transitive weak-density is in $\PSPACE$.

After some basic definitions, we, first, do a little warm-up by revisiting algorithm and complexity of the logic $\KQQab$ (already treated in \cite{Halpern}) in the frame of our settings. Then we briefly review the {\window}s approach presented in \cite{BalGasq25}. 
Then from section \ref{extension:to:transitivity} we transfer results to the transitive cases. 

\section{\Kdeab}

\paragraph{Syntax}
Let $\At$ be the set of all atoms $(p,q,\ldots)$.
The set $\Fo$ of all formulas $(\phi,\psi,\ldots)$ is now defined by
$$\phi:=p\mid\bot\mid\neg\phi\mid(\phi\wedge\phi)\mid\square_{a}\phi\mid\square_{b}\phi$$
where $p$ ranges over $\At$.
As before, we follow the standard rules for omission of the parentheses, we use the standard abbreviations for the Boolean connectives $\top$, $\vee$ and $\rightarrow$ and for all formulas $\phi$, $d(\phi)$ denotes the degree of $\phi$ and $\pipe\phi\pipe$ denotes the number of occurrences of symbols in $\phi$.
For all formulas $\phi$, we write $\lozenge_{a}\phi$ as an abbreviation instead of $\neg\square_{a}\neg\phi$ and we write $\lozenge_{b}\phi$ as an abbreviation instead of $\neg\square_{b}\neg\phi$.
%
%
%
%
\paragraph{Semantics}
A {\em frame}\/ is now a $3$-tuple $(W,R_{a},R_{b})$ where $W$ is a nonempty set and $R_{a}$ and $R_{b}$ are binary relations on $W$.
A frame $(W,R_{a},R_{b})$ is {\em weakly dense}\/ if for all $s,t\in W$, if $sR_{a}t$ then there exists $u\in W$ such that $sR_{a}u$ and $uR_{b}t$.
A {\em valuation on a frame $(W,R_{a},R_{b})$}\/ is a function $V\ :\ \At\longrightarrow\wp(W)$.
A {\em model}\/ is a $4$-tuple consisting of the $3$ components of a frame and a valuation on that frame.
A {\em model based on the frame $(W,R_{a},R_{b})$}\/ is a model of the form $(W,R_{a},R_{b},V)$.
With respect to a model $(W,R_{a},R_{b},V)$, for all $s\in W$ and for all formulas $\phi$, the {\em satisfiability of $\phi$ at $s$ in $(W,R_{a},R_{b},V)$}\/ (in symbols $s\models\phi$) is inductively defined as usual.
In particular,
\begin{itemize}
\item $s\models\square_{a}\phi$ if and only if for all $t\in W$, if $sR_{a}t$ then $t\models\phi$,
\item $s\models\square_{b}\phi$ if and only if for all $t\in W$, if $sR_{b}t$ then $t\models\phi$.
\end{itemize}
As a result,
\begin{itemize}
\item $s\models\lozenge_{a}\phi$ if and only if there exists $t\in W$ such that $sR_{a}t$ and $t\models\phi$,
\item $s\models\lozenge_{b}\phi$ if and only if there exists $t\in W$ such that $sR_{b}t$ and $t\models\phi$.
\end{itemize}
A formula $\phi$ is {\em true in a model $(W,R_{a},R_{b},V)$}\/ (in symbols $(W,R_{a},R_{b},V)\models\phi$) if for all $s\in W$, $s\models\phi$.
A formula $\phi$ is {\em valid in a frame $(W,R_{a},R_{b})$}\/ (in symbols $(W,R_{a},R_{b})\models\phi$) if for all models $(W,R_{a},R_{b},V)$ based on $(W,R_{a},R_{b})$, $(W,R_{a},R_{b},V)\models\phi$.
A formula $\phi$ is {\em valid in a class ${\mathcal C}$ of frames}\/ (in symbols ${\mathcal C}\models\phi$) if for all frames $(W,R_{a},R_{b})$ in ${\mathcal C}$, $(W,R_{a},R_{b})\models\phi$.
\paragraph{Axiomatization}
In our language, a {\em bimodal logic}\/ is a set of formulas closed under uniform substitution, containing the standard axioms of $\CPL$, closed under the standard inference rules of $\CPL$, containing the axioms
\begin{description}
\item[$(\Axiom1_{a})$] $\square_{a}p\wedge\square_{a}q\rightarrow\square_{a}(p\wedge q)$,
\item[$(\Axiom2_{a})$] $\square_{a}\top$,
\item[$(\Axiom1_{b})$] $\square_{b}p\wedge\square_{b}q\rightarrow\square_{b}(p\wedge q)$,
\item[$(\Axiom2_{b})$] $\square_{b}\top$,
\end{description}
and closed under the inference rules
\begin{description}
\item[$(\Rule1_{a})$] $\frac{p\rightarrow q}{\square_{a}p\rightarrow\square_{a}q}$,
\item[$(\Rule1_{b})$] $\frac{p\rightarrow q}{\square_{b}p\rightarrow\square_{b}q}$.
\end{description}
Let $4(a)$ (resp.\ $4(b)$) be the formula $\Ba\Ba p\rightarrow\Ba p$ (resp.\ $\Bb\Bb p\rightarrow\Bb p$) and $De_{a,b}$ be $\Ba\Bb p\rightarrow\Ba p$. \\
Let 1) \Kab\ be the least bimodal logic, 2) \KQab\ be the least bimodal logic containing $4(a)$, 3) \KQQab\ be the least bimodal logic containing both $4(a)$ and $4(b)$, 4) \Kdeab\ be the least bimodal logic containing the formula $De_{a,b}$, and 5) \KQdeab\ be the least one containing both $4(a)$ and $De_{a,b}$, and 4) \KQQdeab\ the one containing in addition $4(b)$. \\ 
As is well-known, if $L$ is one of them, $L$ is equal to the set of all formulas $\phi$ such that $\phi$ is valid in the class of all frames which are weakly-dense if $De_{a,b}\in L$, where $\Ra$ (resp.\ $\Rb$) is transitive if $4(a)\in L$ (resp. $4(b)\in L$). This can be proved by using the so-called canonical model construction e.g.\ in \cite{Chellas80}. 
\paragraph{A decision problem}
Let $DP_{a,b}$ be the following decision problem:
\begin{description}
\item[input:] a formula $\phi$,
\item[output:] determine whether $\phi$ is valid in the class of all weakly dense frames.
\end{description}
Using the fact that the coarsest filtration of a weakly dense model is weakly dense, one may readily prove that $DP_{a,b}$ is in $\coNEXPTIME$.
We will prove in Section~\ref{section:complexity:of:KDeab} that $DP_{a,b}$ is in $\PSPACE$.
%
%
%
%

%
%
%
%
%
%
%
%
%
%
\section{Basics}
Let $w$ be a finite set of formulas.
We define $d(w)=\max\{d(\phi):\ \phi\in w\}$ and $\pipe w\pipe=\Sigma\{\pipe\phi\pipe:\ \phi\in w\}$.
Moreover, let $\CSF(w)$ be the least set $u$ of formulas such that for all formulas $\phi,\psi$,
\begin{itemize}
\item $w\subseteq u$,
\item if $\phi \wedge \psi\in u$ then $\phi \in u$ and $\psi\in u$,
\item if $\neg (\phi \wedge \psi)\in u$ then $\neg \phi\in u$ and $\neg \psi\in u$,
\item if $\neg \phi\in u$ then $\phi \in u$.
\end{itemize}
In other respect, $\SF(w)$ is the least set $u$ of formulas s. th.\ for all formulas $\phi,\psi$,
\begin{itemize}
\item $w\subseteq u$,
\item if $\phi \wedge \psi\in u$ then $\phi \in u$ and $\psi\in u$,
\item if $\neg (\phi \wedge \psi)\in u$ then $\neg \phi\in u$ and $\neg \psi\in u$,
\item if $\neg \phi\in u$ then $\phi \in u$,
\item if $\square_{a} \phi\in u$ then $\phi \in u$,
\item if $\neg\square_{a} \phi\in u$ then $\neg\phi \in u$,
\item if $\square_{b} \phi\in u$ then $\phi \in u$,
\item if $\neg\square_{b} \phi\in u$ then $\neg\phi \in u$.
\end{itemize}
If $w$ is a set of formulas:
\begin{itemize}
    \item if $4(a)\in L$: $\Bam(w)=\{\phi, \Ba\phi\colon \Ba\phi\in w\}$
    \item if $4(a)\not\in L:\Bam(w)=\{\phi\colon \Ba\phi\in w\}$. Notice that $d(\Box_{a}^{\mhyphen}(w))\leq d(w)-1$. 
\end{itemize}
Similarly for $\Bbm(w)$. %
\\
\\
For all finite sets $u$ of formulas, let $\CCS(u)$ be the set of all finite sets $w$ of formulas such that $u\subseteq w\subseteq \CSF(u)$ and for all formulas $\phi,\psi$,
\begin{itemize}
\item if $\phi \wedge \psi\in w$ then $\phi \in w$ and $\psi\in w$,
\item if $\neg (\phi \wedge \psi)\in w$ then $\neg \phi\in w$ or $\neg \psi\in w$,
\item if $\neg \neg \phi\in w$ then $\phi \in w$,
\item $\bot\not\in w$,
\item if $\neg \phi\in w$ then $\phi\not\in w$.
\end{itemize}
For all finite sets $u$ of formulas, the elements of $\CCS(u)$ are in fact simply unsigned saturated open branches for tableaux of classical propositional logic (see \cite{Smullyan68}).
As a result, for all finite sets $u$ of formulas, an element of $\CCS(u)$ is called a {\em consistent classical saturation (CCS) of $u$.}
As the reader may easily verify, for all finite sets $u, w$ of formulas, if $w\in \CCS(u)$ then $d(u)=d(w)$ and $\CCS(w)=\{w\}$.
Moreover, there exists an integer $c_{0}$ such that for all finite sets $u, w$ of formulas, if $w\in \CCS(u)$ then $\pipe w\pipe\leq c_{0}.\pipe u\pipe$.
\begin{proposition} [Properties of {\CCS}s]\label{prop-CCS}
For all finite sets $u, v, w, w_1, w_2$ of formulas,
\begin{enumerate}
%
%
%
%
%
%
%
%
\item\label{One} if $w\in \CCS(u\cup w_1)$ and $w_1\in \CCS(v)$ then $w\in \CCS(u\cup v)$,
\item\label{Two} if $w\in \CCS(u\cup v)$ then it exists $v_1\in \CCS(u)$ and $v_2\in\CCS(v)$ s.th.\ \ $v_1\cup v_2=w$,
\item\label{Three} if $w\in \CCS(u\cup w_1)$ and $w_1$ is a \CCS\ then it  exists $v_2\in \CCS(u)$ s.th.\ $w_1\cup v_2 = w$,
%
%
%
%
\item\label{Four} if $w\in \CCS(u\cup w_1)$ and $w_1\in \CCS(v)$ then $d(w\setminus w_1)\leq d(u)$,
\item\label{Five} if $u\subseteq v$ and $w\in \CCS(v)$ then  $\SF(u)\cap w\in\CCS(u)$,
\item\label{Six} if $u$ is true at a world $x\in W$ of a \Kdeab-model $M=(W,R_{a},R_{b},V)$, then 
the set $\SF(u)\cap \{\phi\colon M,x\models \phi\}$ is in $\CCS(u)$.
%
%
\end{enumerate}
\end{proposition}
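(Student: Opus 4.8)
The plan is to prove the six properties of $\CCS$s one at a time, all from the underlying characterisation given just before the proposition: a set $w$ belongs to $\CCS(u)$ exactly when $u \subseteq w \subseteq \CSF(u)$ and $w$ satisfies the five saturation/consistency closure conditions. The recurring technique will be to unfold this definition on both the hypothesis side and the goal side, and then verify the three requirements for membership (the inclusion bounds and the closure conditions) by elementary set-theoretic reasoning. I expect properties \ref{One}--\ref{Three} to be the combinatorial core, property \ref{Four} to be a degree bookkeeping step, and properties \ref{Five}--\ref{Six} to bridge to the semantics.

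\medskip

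\noindent\textbf{Properties \ref{One}--\ref{Three} (the combinatorial core).} For property \ref{One}, I would start from $w \in \CCS(u \cup w_1)$, so $u \cup w_1 \subseteq w \subseteq \CSF(u \cup w_1)$ with the closure conditions holding, and from $w_1 \in \CCS(v)$, so in particular $v \subseteq w_1$. The inclusion $u \cup v \subseteq w$ follows since $v \subseteq w_1 \subseteq w$; for the upper bound I would argue $\CSF(u \cup w_1) \subseteq \CSF(u \cup v)$ using the monotonicity of $\CSF$ together with the fact that every formula of $w_1$ decomposes via $\CSF$-closure into pieces already present once $v$ is there (because $w_1 \subseteq \CSF(v)$). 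The closure conditions on $w$ are inherited unchanged. Property \ref{Two} is the ``splitting'' direction: given $w \in \CCS(u \cup v)$, I would define $v_1 = w \cap \CSF(u)$ and $v_2 = w \cap \CSF(v)$ (or a similar explicit restriction) and check that each lies in the respective $\CCS$, that their union recovers $w$, using that every formula of $w$ lies in $\CSF(u \cup v) = \CSF(u) \cup \CSF(v)$ up to the closure rules. Property \ref{Three} is the special case of \ref{Two} where $w_1$ is already saturated (so $\CCS(w_1) = \{w_1\}$), which lets me take $v_1 = w_1$ outright and extract $v_2 \in \CCS(u)$ so that $w_1 \cup v_2 = w$.

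\medskip

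\noindent\textbf{Property \ref{Four} (degree control).} Here I would use the already-noted fact that $w \in \CCS(u)$ forces $d(u) = d(w)$, together with the decomposition from property \ref{Three}: writing $w = w_1 \cup v_2$ with $v_2 \in \CCS(u)$, the set $w \setminus w_1$ is contained in $v_2$, and since $d(v_2) = d(u)$ we obtain $d(w \setminus w_1) \le d(u)$ immediately. The only subtlety is that $w \setminus w_1$ need not itself be a $\CCS$, so I would phrase the degree bound purely in terms of containment rather than equality.

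\medskip

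\noindent\textbf{Properties \ref{Five}--\ref{Six} (semantic bridge).} For property \ref{Five} I would take $w \in \CCS(v)$ with $u \subseteq v$ and show $\SF(u) \cap w \in \CCS(u)$ by checking: the lower bound $u \subseteq \SF(u) \cap w$ (since $u \subseteq \SF(u)$ and $u \subseteq v \subseteq w$); the upper bound $\SF(u) \cap w \subseteq \CSF(u)$, which follows because $\SF(u) \subseteq \CSF(u)$; and closure under the five $\CCS$-conditions, which is where the definition of $\SF$ does its work, since $\SF(u)$ is closed under exactly the Boolean subformula-decompositions that the $\CCS$-conditions require, so intersecting with it preserves those conditions while the consistency clauses ($\bot \notin w$ and non-contradiction) are inherited from $w$. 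Property \ref{Six} is then a direct application: the set of formulas satisfied at $x$ is deductively consistent and Boolean-saturated by the truth definition, so its intersection with $\SF(u)$ meets all the $\CCS$-conditions, and it contains $u$ because $u$ is true at $x$. \textbf{The main obstacle} I anticipate is getting the $\CSF$/$\SF$ interaction exactly right in properties \ref{Two} and \ref{Five}: the splitting in \ref{Two} requires that the closure operator $\CSF$ distributes appropriately over unions (so that a saturated set over $u \cup v$ really does decompose into saturations over $u$ and over $v$ without creating cross-terms), and confirming this distribution respects the disjunction clause $\neg(\phi \wedge \psi) \in w \Rightarrow \neg\phi \in w \text{ or } \neg\psi \in w$ is the delicate point, since a disjunctive choice made in $w$ must land consistently in one of the two restricted sets.
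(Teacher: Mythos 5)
Your handling of Items \ref{One}--\ref{Four} is correct and essentially identical to the paper's own proof: the same splitting $v_{1}=w\cap\CSF(u)$, $v_{2}=w\cap\CSF(v)$ for Item \ref{Two}, Item \ref{Three} obtained as the instance of Item \ref{Two} in which $\CCS(w_{1})=\{w_{1}\}$, and Item \ref{Four} from Item \ref{Three} together with $w\setminus w_{1}\subseteq v_{2}$ and $d(v_{2})=d(u)$. The ``delicate point'' you flag about the clause for $\neg(\phi\wedge\psi)$ does resolve: $\CSF(u)$ contains \emph{both} $\neg\phi$ and $\neg\psi$ whenever it contains $\neg(\phi\wedge\psi)$, so whichever disjunct $w$ chose survives the intersection, and there are no cross-terms because $\CSF(u\cup v)=\CSF(u)\cup\CSF(v)$ (the closure rules only ever decompose a single formula, never combine two).

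The genuine gap is in your Item \ref{Five}. You derive the upper bound $\SF(u)\cap w\subseteq\CSF(u)$ from the inclusion $\SF(u)\subseteq\CSF(u)$, but that inclusion is backwards: $\SF$ is closed under strictly more rules than $\CSF$ (the modal decompositions $\Box_{a}\phi\mapsto\phi$, $\neg\Box_{a}\phi\mapsto\neg\phi$, and likewise for $\Box_{b}$), so in fact $\CSF(u)\subseteq\SF(u)$, typically properly. Worse, the containment you need cannot be repaired: take $u=\{\Box_{a}p\}$ and $v=w=\{\Box_{a}p,\,p\}$; then $u\subseteq v$ and $w\in\CCS(v)$, yet $\SF(u)\cap w=\{\Box_{a}p,\,p\}\not\subseteq\CSF(u)=\{\Box_{a}p\}$. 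So the $\CSF$-upper-bound clause in the definition of $\CCS$ is exactly the part that does \emph{not} transfer under intersection with $\SF(u)$; what does transfer --- and what the paper's one-line proof actually checks when it says to ``replace $\in w$ by $\SF(u)\cap w$ in the definition of $\CCS$'' --- is the lower bound $u\subseteq\SF(u)\cap w$ together with the five closure/consistency conditions, which $\SF(u)\cap w$ inherits from $w$ because $\SF(u)$ is closed under all the Boolean decomposition rules. Your Item \ref{Six}, which you argue in precisely this condition-checking style without invoking the $\CSF$ bound, is in line with the paper; Item \ref{Five} must be argued the same way, and as written your step fails.
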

\begin{proof}
%
%
Item~(\ref{One}) is an immediate consequence of the properties of classical open branches of tableaux.
As for Item~(\ref{Two}), take $v_1=w\cap\CSF(u)$ and $v_2=w\cap\CSF(v)$.
Item~(\ref{Three}) follows from Item~(\ref{Two}).
Concerning Item~(\ref{Four}), if $w\in \CCS(u\cup w_1)$ then by Item~(\ref{Three}), there exists $w_2\colon w_2\in \CCS(u)$ and $w_1\cup w_2 = w$.
Therefore, $w\setminus w_1\subseteq w_2$ and $d(w\setminus w_1)\leq d(w_2)=d(u)$. 
Item \ref{Five} follows by replacing $\in w$ by $\SF(u)\cap w$ in the definition of $\CCS$. 
Finally, about Item~(\ref{Six}), the reader may easily verify it 
by applying the definition clauses of $\models$. 
\end{proof}

\section{\KQQab}\label{Kab}

 Let $L$ be \KQQab. Because of Prop. \ref{prop-CCS}.\ref{Six}, testing the $L$-satisfiability of a set $u$ of formulas amounts to testing that of a $\CCS$, since $u$ is $L$-satisfiable if and only if there exists a $L$-satisfiable $w\in\CCS(u)$. Hence, given an initial set of formulas $u$ to be tested, we will rather test a non-deterministically chosen set of $\CCS(u)$.\\
 For modal tableaux with transitivity the termination of algorithms are based on the detection of loops in the sequence of \CCS. The seminal algorithm for logic ${\K}4$ of \cite{Ladner77} makes use of a global stack (denoted by $\Sigma$) which memorizes the context in which previous $\Diamond$-formulas has been treated. But this context cannot be the whole \CCS, loops would happen after an exponentially long path. Instead, the sets of ``propagated'' formulas ($\Bam(w)$ and $\Bam(w)$) are considered. \\
In what follows we use built-in functions \algand\ and \all.
The former function lazily implements a logical ``and".
The latter function lazily tests if all members of its list argument are true. 
Essentially, within our setting, Ladner's algorithm may be formulated as follows (the initial call being $\KQsat(\emptyset,\chooseCCS(\{u\}))$ where $\emptyset$ denotes an empty stack):\\

\setlength{\textfloatsep}{0pt}
\setlength{\floatsep}{0pt}
\begin{algorithm}[H]
 \floatname{algorithm}{Function}
\begin{algorithmic}
\Function{$\KQsat$}{$\Sigma,w$}
\State {return}
\State {\hspace{0.87cm}$\bot\not\in w$}
\State {\hspace{0.2cm}$\algand\ \all\{\sat(\Sigma.(u,\neg \psi),\chooseCCS(u))\colon $}
\State {\hspace{3cm}$\neg\Box\psi\in w, u=\Box^{\mhyphen}(w)\cup\{\neg\psi\}, (\Box^{\mhyphen}(w),\neg\psi)\not\in\Sigma\}$}
\EndFunction
\end{algorithmic}
\end{algorithm}

Superficial differences lie on the fact that Ladner's uses signed formulas and is deterministic (it uses explicit for-loops vs.\ non-deterministic choice).\\
His last condition, $(\Box^{\mhyphen}(u),\neg\psi)\not\in\Sigma$, detects loops on a branch of the recursion tree using polynomial space. Let us see how: let $(u_0,\phi_0),(u_1,\phi_1),\cdots,(u_k,\phi_k)$ be the sequence of values taken by the parameters $u$ and $\phi$ in the recursion tree. They can be understood as contexts for the development of a $\Diamond$-formula. Ladner's arguments is based on the fact that $u_i$s are subsets of $\SF(u)$ and most of all they can only grow, i.e.\ for $0\leq i<k: u_i\subseteq u_{i+1}$, but a sequence of identical $u_i$s can only lead to a sequence of $\lgu$ distinct contexts $(u_i,\phi_i)$, after that, the next $u_i$ must be strictly greater. Hence, the total length of the sequence cannot exceed $\lgu.\card(\SF(u))={\cal O}(\lgu^2)$. Thus $\lgx{\Sigma}={\cal O}(\lgu^3)$. Since $\Sigma$ is implemented as a global stack, the overall space needed for a call to $\KQsat$ is still cubic. \\
Then completeness is ensured by building a model where if $(\Box^{\mhyphen}(u_k),\neg\psi_k)\in\Sigma$, say $(\Box^{\mhyphen}(u_k),\neg\psi_k)=(\Box^{\mhyphen}(u_k'),\neg\psi_k')$ for some $0\leq k'<k$, then $(w_k,w_{k'})\in R$ ($w_{k'}$ being the possible world associated with $\Box^{\mhyphen}(u_k')\cup\{\neg\psi_k'\}$).\\
But for our bimodal logic the argument, as is, is not correct since contexts of the sequence are no more increasing. We propose the following algorithm for \KQQab, directly inspired by that of \cite{Ladner77}, which admits a similar bound (same initial call as above): 

\setlength{\textfloatsep}{0pt}
\setlength{\floatsep}{0pt}
\begin{algorithm}[H]
 \floatname{algorithm}{Function}
\begin{algorithmic}
\Function{$\KQsat_{a,b}$}{$\Sigma,w$}
\State {return}
\State {\hspace{0.87cm}$\bot\not\in w$}
\State {\hspace{0.2cm}$\algand\ \all\{\sat(\Sigma.(u,\neg\psi),\chooseCCS(u\cup\{\neg\psi\}))\colon $}
\State {\hspace{2cm}$x\in\{a,b\},\neg\Box_x\psi\in w, u=\Box_x^{\mhyphen}(w), (u,\neg\psi)\not\in\Sigma\}$}
\EndFunction
\end{algorithmic}
\end{algorithm}
Its soundness and completeness proof being embedded in that of the algorithm for $\KQQdeab$, we don't give it here. We just focus on the polynomial termination argument. \\

In the function $\KQsat_{a,b}$ above, the set $u$ will be called a $a$-heir (resp.\ $b$-heir) of $w$ if the $\Diamond$-formulas under concern is $\neg\Box_a\psi_{i-1}$ (resp.\ $\neg\Box_b\psi_{i-1}$). 
\begin{lemma}\label{limit:recursion:depth}
    Let $(u_1,\psi_1,w_1),(u_2,\psi_2,w_2),\cdots,(u_k,\psi_k,w_k)$, the sequence of values taken by the parameters $u$, $\psi$ and $w$ in a branch of the recursion tree. If we consider only the sequence $(u_1,\psi_1),(u_2,\psi_2),\cdots,(u_k,\psi_k)$, its max length for being without repetition between heirs of the same type ($a$ or $b$) is ${\cal O}(\lgu^4)$. \\
\end{lemma}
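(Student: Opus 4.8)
The plan is to adapt Ladner's one-modality counting argument to the bimodal setting, where the key difficulty is that the contexts $u_i = \Box_x^{\mhyphen}(w_i)$ no longer grow monotonically along a branch. In the monomodal case a branch splits into at most $\lgu$ blocks, each of length at most $\card(\SF(u))$, because once the context $u$ strictly grows it can never shrink back. Here, alternating between $a$-heirs and $b$-heirs can cause the context to decrease in degree, so I cannot argue directly that the sequence $(u_1,\psi_1),\ldots,(u_k,\psi_k)$ is globally increasing. Instead I would isolate the structure imposed by the two transitivity axioms $4(a)$ and $4(b)$ separately.

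First I would pin down the effect of degree. Since taking $\Box_x^{\mhyphen}$ with $4(x)\in L$ retains the boxed formulas, the degree $d(u_i)$ is non-increasing along any maximal run of heirs of a \emph{single} type, and strictly decreasing whenever we switch type (because a type-$b$ step applied after a type-$a$ step discards the $\Box_a$-formulas, lowering the degree, and symmetrically). I would make this precise: classify each step $i\to i{+}1$ as an $a$-step or a $b$-step according to which $\Diamond$-formula is being developed, and observe that $d(u_{i+1}) \le d(u_i)$ always, with the degree strictly dropping at each \emph{alternation} between the two types. Since $d(u_1)\le d(u)$, this bounds the number of alternations by $d(u)=\bigo(\lgu)$, hence the branch decomposes into $\bigo(\lgu)$ maximal single-type runs.

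Next I would bound the length of a single-type run. Within a maximal run of, say, $a$-heirs, the context behaves exactly as in Ladner's monomodal $\K4$ analysis: the $a$-transitivity axiom $4(a)$ makes $\Box_a^{\mhyphen}$ accumulate its boxed formulas, so the contexts $u_i$ form a $\subseteq$-increasing chain of subsets of $\SF(u)$, and a plateau of identical contexts can produce at most $\lgu$ pairwise-distinct pairs $(u_i,\psi_i)$ before the no-repetition condition $(u,\neg\psi)\not\in\Sigma$ forces the context to strictly grow. With $\card(\SF(u)) = \bigo(\lgu^2)$ distinct possible contexts, the reasoning that gives $\lgu\cdot\card(\SF(u))=\bigo(\lgu^3)$ in the monomodal case bounds each run by $\bigo(\lgu^3)$.

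Finally I would multiply: $\bigo(\lgu)$ runs, each of length $\bigo(\lgu^3)$, yields the claimed total length $\bigo(\lgu^4)$ for the sequence $(u_1,\psi_1),\ldots,(u_k,\psi_k)$ without repetition between heirs of the same type. The main obstacle is the second step — making the degree-drop-at-alternation claim airtight, since it rests on the precise interaction between $\Box_x^{\mhyphen}$ and the presence of $4(x)$ in $L$, and one must check that mixed contexts (containing both $\Box_a$- and $\Box_b$-formulas) behave as expected when the type switches. I would handle this by a careful case analysis on the definition of $\Box_x^{\mhyphen}$ together with the remark, already noted in the Basics section, that $d(\Box_x^{\mhyphen}(w))\le d(w)-1$ when $4(x)\notin L$, adapting it to quantify how much degree is necessarily lost at a genuine type alternation.
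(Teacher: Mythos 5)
Your skeleton is the same as the paper's: split the branch into maximal runs of heirs of a single type, bound the number of runs by a degree argument, apply Ladner's increasing-context argument inside each run, and multiply the two bounds. The within-run part is correct (with $4(a)\in L$ the contexts of an $a$-run do form a $\subseteq$-increasing chain inside $\SF(u)$, and the plateau argument applies). The gap is in your degree step, and it is fatal as stated: neither of its two claims holds. The recursive call is made on a \CCS\ of $u_i\cup\{\neg\psi_i\}$, and the witness $\neg\psi_i$ may hide, among its \emph{classical} subformulas, boxed formulas of arbitrarily high degree; these land in $w_{i+1}$ and hence in the next context. Concretely, take $w_i=\{\Ba p,\ \neg\Ba\neg(\Bb\chi\wedge q\wedge\neg\Bb s)\}$: the $a$-step gives $u_i=\Bam(w_i)=\{\Ba p,p\}$ of degree $1$, but $w_{i+1}$ contains $\Bb\chi$, so the following $b$-step gives $u_{i+1}=\Bbm(w_{i+1})=\{\Bb\chi,\chi\}$ of degree $d(\chi)+1$, which is arbitrarily large. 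So $d(u_{i+1})\le d(u_i)$ fails, and the degree does not drop at the switching step; replacing $\Bb\chi$ by $\Ba\chi$ in the witness shows monotonicity fails inside a single-type run as well.

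This is exactly why the paper's Fact~1 is not a statement about consecutive contexts: it compares \emph{across} the alternation, two indices apart ($u_i$ an $a$-heir, $u_{i+1}$ a $b$-heir, conclusion $d(u_{i+1})<d(u_{i-1})$), because the loss caused by discarding the $\Box_a$-formulas only becomes visible after the witness's own contribution has been unboxed at the next step. (Even that formulation needs care: the quantity that is non-increasing at every step and strictly decreases across an $a$/$b$ switch is the degree of the whole call argument $u_i\cup\{\neg\psi_i\}$, equivalently of the \CCS\ $w_{i+1}$, not of the bare context $u_i$; this is what really bounds the number of runs by $d(u)$.) Note also that the repair you propose, adapting $d(\Box_{x}^{\mhyphen}(w))\le d(w)-1$, is unavailable here: that inequality is the defining clause for the case $4(x)\notin L$, whereas this lemma concerns $\KQQab$, where both $4(a)$ and $4(b)$ are in $L$, so $\Box_{x}^{\mhyphen}$ retains the boxed formulas and no single step loses degree. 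Your final arithmetic ($\bigo(\lgu)$ runs of length $\bigo(\lgu^{3})$ each, hence $\bigo(\lgu^{4})$, versus the paper's $\bigo(\lgu)\times\bigo(\lgu^{2})$) matches the stated bound and would go through once the degree claim is replaced by the cross-alternation version, but as written the pivotal step fails.
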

\begin{proof}
    Firstly, we will need the following \emph{Fact 1}: 
    let us consider $u_{i-1},u_i,u_{i+1}$ such that $u_i$ is an $a$-heir (resp. a $b$-heir) of $u_{i-1}$ and $u_{i+1}$ a $b$-heir (resp. an $a$-heir) of $u_i$ then $d(u_{i+1})<d(u_{i-1})$. \\
    We just treat the first case ($a$-heir then $b$-heir), the other one is similar. Indeed, let $\psi\in u_{i+1}$ then $\psi\in\SF(\Box_b^{\mhyphen}(w_i)\cup\{\neg\psi_i\})$. If $d(\psi)=d(u_i)$ then $\psi=\Box_b \psi'\in w_i$ (otherwise, $\psi\in\SF(\Box_b^{\mhyphen}(w_i)\cup\{\psi_i\})$ and $d(\psi)<d(u_i)\leq d(u_{i-1})$. Hence, $\psi\in\SF(u_i)$, similarly, if $d(\psi)=d(u_{i-1})$ then $\psi=\Box_a \psi''\in w_{i-1}$, leading to a  contradiction since $\psi$ also equals $\psi=\Box_b \psi'$. \\
    Now, w.l.o.g.\ we can suppose that $(u_1,u_2,\cdots,u_k)$ is divided into ``segments'' of only $a$-heirs, followed by only $b$-heirs, then $a$-heirs, and so on, i.e.\ with $k_0=0$:
    $(u_{k_0+1},u_{k_0+2}\cdots,u_{k_1},u_{k_1+1},\cdots,u_{k_2,},u_{k_2+1},\cdots,u_{k_m})$ with $k_m=k$, such that:\\
    if $j\leq 0$ is even (resp.\ odd), then for $l\in[k_j+1..k_j+1[\colon u_{l+1}$ is an $a$-heir of $u_l$ (resp. a $b$-heir). \\
    Accordingly to the Ladner's argument, the length of each sequence of $a$-heirs and of each sequence of $b$-heirs cannot exceed $\lgu^2$, and because of the Fact 1 above, there can be only $d(u)\leq \lgu$ such subsequences. Hence the overall length of it, namely $k$ is bounded by $\lgu^3$. Now if we consider only $a$-heirs (of $b$-heirs), the same holds: there can be only $d(u)\leq \lgu$ subsequences of $a$-heirs, hence the limit for a repetition between heirs of the same type is $2.d(u)\leq 2.\lgu$ and the memory size of the whole sequence $2.\lgu^4$. As a consequence, if none of the $w_i$ is inconsistent, then there exists $1\leq i<j\leq 2.\lgu^4$ such that this branch of the recursion tree may run infinitely without inconsistency (if we remove the loop-test) on\\
    $(u_1,\psi_1,w_1),\cdots,(u_i,\psi_i,w_i),\cdots,(u_{j-1},\psi_{j-1},w_{j-1}),$\\
    \mbox{}\hspace{3cm}$(u_i,\psi_i,w_i),\cdots,(u_{j-1},\psi_{j-1},w_{j-1}),\cdots$\\
    by infinitely repeating the segment $(u_i,\psi_i,w_i),\cdots,(u_{j-1},\psi_{j-1},w_{j-1}),\cdots$ after $(u_{i-1},\psi_{i-1},w_{i-1})$.\\
    Of course, the result holds as well for \Kab\ and for \KQab, since then a sequence of heirs without repetition would still be of length in ${\cal O}(\lgu^2)$. 
    \end{proof}
    Hence, the satisfiability problem for  \KQQab\ is $\PSPACE$-complete\footnote{As already said, this is not new see e.g.\ \cite{Halpern}; please consider this as warming up.}. 

\section{{\Window}s}
For handling weak-density, we introduced the notion of $\window$ in \cite{BalGasq25}, let us have a look back on it. \\
Let $u$ be a finite set of formulas and $w$ be a \CCS\ of $u$.
Let $k\geq d(w)$.
A {\em $k$-{\window} for $w$}\/ (Fig. \ref{{\window}1}) is a sequence $(w_i)_{0\leq i\leq k}$ of sets of formulas (called {\em dense-successors of $w$}) such that
\begin{enumerate}
\item $w_k\in \CCS(\Bam(w))$,
\item for all $0\leq i < k$, $w_i\in \CCS(\Bam(w)\cup \Bbm(w_{i+1}))$.
\item [] (Notice that if $4(b)\in L$ then for all $0\leq i\leq j \leq k$, $\Bbm(w_{j})\subseteq \Bbm(w_i)$)
\end{enumerate}
An {\em $\infty$-{\window} for $w$}\/ is an infinite sequence $(w_i)_{0\leq i}$ of sets of formulas such that
for all $i\geq0$, $w_i\in \CCS(\Bam(w)\cup \Bbm(w_{i+1}))$.\\

%
%

\noindent 
\framebox{\begin{minipage}{0.95\textwidth}   
\begin{figure}[H]
\centering
\begin{tikzpicture}[->, >=stealth, scale=0.7, transform shape,font=\large]

\tikzstyle{state}=[minimum size=0.8cm, inner sep=2pt]

\node[state] (w) at (10,3) {$w$};
\node[state] (w0) at (10,0) {$w_0$};
\node[state] (w1) at (8,0) {$w_1$};
\node[state] (w2) at (6,0) {$w_2$};
\node[state] (wd) at (2,0) {$w_{d(w)}$};

\node[draw, dashed, rounded corners, fit=(wd)(w2)(w1)(w0), inner sep=6pt, label=below:{\small }] {};

\draw (w) -- node[pos=0.7, left] {\small$a$} (wd);
\draw (w) -- node[pos=0.7, left] {\small$a$} (w2);
\draw (w) -- node[pos=0.7, right] {\small$a$} (w1);
\draw (w) -- node[pos=0.7, right] {\small$a$} (w0);

\draw[dotted] (wd) -- node[below] {\small$b*$} (w2);
\draw[dashed] (w2) -- node[below] {\small$b$} (w1);
\draw[dashed] (w1) -- node[below] {\small$b$} (w0);

\end{tikzpicture}
\caption{$d(w)$-{\window} for $w$}
\label{{\window}1}
\end{figure}
\end{minipage}
}
%
%
\vspace{1cm}\\
%
%
Let $T_0=(w_i)_{0\leq i\leq k}$ and $T_1=(\wpp_i)_{1\leq i\leq k+1}$ be two $k$-{\window}s for $w$: {\em $T_1$ is a continuation of $T_0$ for $w$}\/ iff for all $i\in\{1,\ldots,k\}$, $\wpp_i\in \CCS(\Bbm(\wpp_{i+1})\cup w_i)$ (Fig.\ \ref{{\window}}).
%
%
%
%
\begin{lemma}[Property of continuations when $4(b)\not\in L$]\label{prop-cont}
\\If $4(b)\not\in L$:\\
Let $u$ be a finite set of formulas and $w$ be a \CCS\ of $u$. Let $k\geq d(w)$.
Let $T_0=(w_i)_{0\leq i\leq k}$ be a $k$-{\window}s for $w$.
If it exists $T_1=(\wpp_i)_{1\leq i\leq k+1}$ which continues $T_0$ for $w$ then $(w_0,\wpp_1,\wpp_2,\cdots,\wpp_{k+1})$ is a $(k+1)$-{\window} for $w$.
%
%
\end{lemma}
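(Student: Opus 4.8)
The plan is to verify the two defining clauses of a $(k+1)$-{\window} for $w$ directly against the sequence $(w_0,\wpp_1,\dots,\wpp_{k+1})$, and to notice first that almost everything comes for free. The top clause $\wpp_{k+1}\in\CCS(\Bam(w))$ and every clause $\wpp_j\in\CCS(\Bam(w)\cup\Bbm(\wpp_{j+1}))$ for $1\le j\le k$ are literally the {\window} clauses of $T_1$ (read with its shifted indexing), so they hold by hypothesis. Hence the whole statement collapses to the single new clause at index $0$, namely
$$w_0\in\CCS(\Bam(w)\cup\Bbm(\wpp_1)).$$

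To get this membership I would check the two inclusions required by the definition of $\CCS$. The upper one, $w_0\subseteq\CSF(\Bam(w)\cup\Bbm(\wpp_1))$, is cheap: from $T_0$ we have $w_0\in\CCS(\Bam(w)\cup\Bbm(w_1))$, the continuation clause at $i=1$ gives $w_1\subseteq\wpp_1$, hence $\Bbm(w_1)\subseteq\Bbm(\wpp_1)$, and monotonicity of $\CSF$ finishes it. The saturation and consistency conditions are properties of the set $w_0$ alone and already hold. So the real content is the lower inclusion $\Bbm(\wpp_1)\subseteq w_0$, i.e.\ every $\Bb\psi\in\wpp_1$ satisfies $\psi\in w_0$.

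For this I would argue by contradiction, tracing the offending box-formula up the tower $T_1$. Suppose $\Bb\psi\in\wpp_1$ but $\psi\notin w_0$; as $\Bbm(w_1)\subseteq w_0$ this forces $\Bb\psi\notin w_1$. By Proposition~\ref{prop-CCS}.\ref{Three} applied to the continuation clause, write $\wpp_1=w_1\cup v_1$ with $v_1\in\CCS(\Bbm(\wpp_2))$, so $\Bb\psi\in v_1\subseteq\CSF(\Bbm(\wpp_2))$ occurs inside some $\chi_2$ with $\Bb\chi_2\in\wpp_2$ and $d(\Bb\chi_2)\ge d(\Bb\psi)+1$. The key step is $\Bb\chi_2\notin w_2$: otherwise $\chi_2\in\Bbm(w_2)\subseteq w_1$ by the $T_0$-clause, so $\Bb\psi\in\CSF(w_1)$; but a \CCS\ decides every formula of its classical closure (this is exactly the idempotence $\CCS(w_1)=\{w_1\}$ recalled in the text), so $\Bb\psi\notin w_1$ would put $\neg\Bb\psi\in w_1\subseteq\wpp_1$, contradicting consistency of $\wpp_1$. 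Iterating this ``lift one level, gain one degree'' step—each passage from level $i$ to $i+1$ using Proposition~\ref{prop-CCS}.\ref{Three} on the continuation clause at $i$ together with the $T_0$-clause at $i$—produces box-formulas $\Bb\psi=\Bb\chi_1,\Bb\chi_2,\dots$ that are new at successive levels with $d(\Bb\chi_{i+1})\ge d(\Bb\chi_i)+1$.

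The main obstacle, and the place where both hypotheses are spent, is closing this recursion. Carrying it through levels $1,\dots,k$ and performing one last continuation step at level $k$ yields $\Bb\chi_{k+1}\in\wpp_{k+1}$ with $d(\Bb\chi_{k+1})\ge d(\Bb\psi)+k\ge k+1$; but $\wpp_{k+1}\in\CCS(\Bam(w))$ forces $d(\Bb\chi_{k+1})\le d(\Bam(w))\le d(w)\le k$, which is the desired contradiction. Here $4(b)\notin L$ is essential, since it is what makes each $\Bbm$-step strictly lower the degree and hence each lift strictly raise it, and $k\ge d(w)$ is precisely the room needed for the degrees to overflow the bound available at the top of the window. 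I expect the delicate point to be the completeness/consistency bookkeeping (the repeated use of $\CCS(w_i)=\{w_i\}$), which is what rules out a box-formula being ``created'' by a disjunction choice made in $\wpp_i$ but not in $w_i$; once that is clean, the degree count finishes the argument.
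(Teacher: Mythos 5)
Your reduction of the lemma to the single new clause $w_0\in\CCS(\Bam(w)\cup\Bbm(\wpp_1))$ is correct (and in fact cleaner than the paper, which re-derives the clauses at indices $1,\dots,k+1$ even though they are literally the {\window} clauses of $T_1$), and your degree-overflow plan is the same mechanism as the paper's ``Fact 2''. But your ``key step'' --- ruling out $\Bb\chi_2\in w_2$ --- rests on a false principle. A \CCS\ does \emph{not} decide every formula of its classical closure, and this is not what the idempotence remark $\CCS(w_1)=\{w_1\}$ says: a \CCS\ is an \emph{open} saturated branch, and an open branch never explores the disjunct it did not choose. Concretely, $w_1=\{\neg(q\wedge\Bb\psi),\neg q\}$ is a \CCS, $\Bb\psi\in\CSF(w_1)$, yet neither $\Bb\psi\in w_1$ nor $\neg\Bb\psi\in w_1$. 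So from $\Bb\chi_2\in w_2$ (hence $\chi_2\in\Bbm(w_2)\subseteq w_1$) and $\Bb\psi\notin w_1$ you cannot conclude $\neg\Bb\psi\in w_1$, and no contradiction arises. Indeed nothing forbids a source of $\Bb\psi$ from lying in $w_2$: the branches $w_1$ and $\wpp_1$ may simply have resolved the disjunctive formula $\chi_2$ in different ways. As stated, your lifting step is not merely unjustified but can actually fail, so the induction cannot get started.

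The gap is local and fixable, and the fix is exactly the paper's move. Before invoking Prop.~\ref{prop-CCS}.\ref{Three}, absorb the old boxes into $w_i$: since $\Bbm(w_{i+1})\subseteq w_i$ (clause of $T_0$) and $\Bbm(A\cup B)=\Bbm(A)\cup\Bbm(B)$, the continuation clause $\wpp_i\in\CCS(\Bbm(\wpp_{i+1})\cup w_i)$ rewrites as $\wpp_i\in\CCS(\Bbm(\wpp_{i+1}\setminus w_{i+1})\cup w_i)$. Now Prop.~\ref{prop-CCS}.\ref{Three} gives $\wpp_i=w_i\cup u_i$ with $u_i\in\CCS(\Bbm(\wpp_{i+1}\setminus w_{i+1}))$, so every formula of $\wpp_i\setminus w_i$ traces back to a formula of $\wpp_{i+1}\setminus w_{i+1}$ \emph{automatically}, with degree dropping by at least one; no decidedness claim is needed. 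Descending from level $k$ (where $d(\wpp_k\setminus w_k)\le d(w)-1$) this yields the paper's Fact 2, $d(\wpp_i\setminus w_i)\le d(w)-k+i-1$, hence $d(\wpp_1\setminus w_1)\le d(w)-k\le 0$. Thus no boxed formula lies in $\wpp_1\setminus w_1$, so $\Bbm(\wpp_1)=\Bbm(w_1)$, and since $\Bam(w)\cup\Bbm(\wpp_1)=\Bam(w)\cup\Bbm(w_1)$, the clause $w_0\in\CCS(\Bam(w)\cup\Bbm(\wpp_1))$ is exactly the index-$0$ clause of $T_0$. With that repair, your argument coincides in substance with the paper's proof, recast per formula and by contradiction instead of as a set-level degree bound.
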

\begin{proof}
First we prove \emph{Fact 2}: for $1\leq i\leq k\colon d(\wpp_i\setminus w_i) \leq d(w)-k+i-1$ by descending induction on $i\in\{1,\ldots,k\}$.\\
Take $i\in\{1,\ldots,k\}$.
Then, either $i=k$, or $i<k$.
In the former case, $\wpp_k\in \CCS(\Bbm(\wpp_{k+1})\cup w_k)$.
Since $w_k\in \CCS(\Bam(w))$ and $\wpp_{k+1}\in \CCS(\Bam(w))$, then $d(w_k)\leq d(w)-1$ and $d(\Bbm(\wpp_{k+1}))\leq d(w)-2$.
Consequently, $d(\wpp_k\setminus w_k) \leq d(w)-1$.
In the latter case, $\wpp_i\in \CCS(\Bbm(\wpp_{i+1})\cup w_i)$; and since $\wpp_{i+1}=\wpp_{i+1}\cup w_{i+1}=(\wpp_{i+1}\setminus w_{i+1})\cup w_{i+1}$, and $\Bbm(A\cup B)=\Bbm(A)\cup \Bbm(B)$, we have $\wpp_i\in \CCS(\Bbm(\wpp_{i+1}\setminus w_{i+1})\cup \Bbm(w_{i+1})\cup w_i)$; but $\Bbm(w_{i+1})\subseteq w_i$, hence $\wpp_i\in \CCS(\Bbm(\wpp_{i+1}\setminus w_{i+1})\cup w_i)$. 
Now, by Prop.\ \ref{prop-CCS}.\ref{Three}: $\exists u\colon u\in\CCS(\Bbm(\wpp_{i+1}\setminus w_{i+1}))$ and $\wpp_i=w_i\cup u$. Thus $\wpp_i\setminus w_i\subseteq u$, and  $d(\wpp_i\setminus w_i)\leq d(u)=d(\Bbm(\wpp_{i+1}\setminus w_{i+1}))\leq d(\wpp_{i+1}\setminus w_{i+1})- 1 \leq d(w)-k+i-1$ (by IH). 

Now we check that $(w_0,\wpp_1,\wpp_2,\cdots,\wpp_{k+1})$ is a $k+1$-{\window} for $w$ by examining the definition of continuations.
Firstly, $\wpp_{k+1}\in \CCS(\Bam(w))$.
Secondly, since $\wpp_k\in \CCS(\Bbm(\wpp_{k+1})\cup w_k)$ and $w_k\in \CCS(\Bam(w))$, then $\wpp_k\in \CCS(\Bbm(\wpp_{k+1})\cup \Bam(w))$.
Thirdly, take $i\in\{1,\ldots,k-1\}$.
Then, $\wpp_i\in \CCS(\Bbm(\wpp_{i+1})\cup w_i)$ and $w_i\in \CCS(\Bam(w)\cup\Bbm(w_{i+1}))$.
Hence by Prop.\ \ref{prop-CCS}.\ref{One}, $\wpp_i\in \CCS(\Bbm(\wpp_{i+1})\cup\Bam(w)\cup\Bbm(w_{i+1}))$.
Since $T_1$ is a continuation of $T_0$, $w_{i+1}\subseteq \wpp_{i+1}$. Then $\Bbm(w_{i+1})\subseteq \Bbm(\wpp_{i+1})$, and $\wpp_i\in \CCS(\Bbm(\wpp_{i+1})\cup\Bam(w))$.
Fourthly, it remains to prove that $w_0\in \CCS(\Bbm(\wpp_1)\cup \Bam(w))$.
By the Fact 2 above, $d(\wpp_1\setminus w_1)\leq d(w)-k\leq 0$, hence if $\Bb \phi\in\wpp_1$ then $\Bb \phi\in w_1$
and thus $\Bbm(\wpp_1)=\Bbm(w_1)$.
Since $w_0\in \CCS(\Bbm(w_1)\cup \Bam(w))$, then $w_0\in \CCS(\Bbm(\wpp_1)\cup \Bam(w))$. 
\end{proof}
\vspace{-0.3cm}

\begin{lemma}[Loops and existence of infinite {\window}]\label{corollary}
Let $u$ be a finite set of formulas and $w$ be a \CCS\ of $u$. Let $\upw=2^{c_{0}.(d(w)+1).\lgw}$. 
\begin{itemize}
    \item Case $4(b)\not\in L$: there exists $(T_i)_{0\leq i\leq \upw}$ be a sequence of $d(w)$-{\window}s for $w$ such that for all $i< \upw$, $T_{i+1}$ is a continuation of $T_i$ for $w$ iff there exists $(\wpp_i)_{i\geq 0}$ an $\infty$-{\window} for $w$.
    \item  Case $4(b)\in L$: there exists $(w_0,w_1)$ a $2$-{\window} for $w$ such that $\Bbm(w_1)\subseteq\wpp_0$ iff then there exists $(\wpp_i)_{i\geq 0}$ an $\infty$-{\window} for $w$.
\end{itemize}
\end{lemma}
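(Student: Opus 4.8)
The plan is to prove each biconditional by taking the hard direction to be ``bounded loop $\Rightarrow$ $\infty$-{\window}'' and the converse to be a routine truncation. I treat the case $4(b)\not\in L$ first, where the work lies. Starting from the chain $(T_i)_{0\le i\le\upw}$ of $d(w)$-{\window}s, I would first count: a $d(w)$-{\window} for $w$ is a tuple of $d(w)+1$ sets, each a \CCS\ of formulas drawn from a pool of size $\bigo(\lgw)$, so there are at most $(2^{c_0.\lgw})^{d(w)+1}=\upw$ distinct such {\window}s. Hence among the $\upw+1$ {\window}s $T_0,\dots,T_{\upw}$ two coincide, say $T_p=T_q$ with $p<q$. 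Since $T_q=T_p$ as sequences of sets and the continuation relation depends only on the sets (the index shift being mere bookkeeping), the step $T_{q-1}\to T_q$ is a legal continuation $T_{q-1}\to T_p$, so splicing the segment $T_p,\dots,T_{q-1}$ periodically yields an infinite sequence $(S_m)_{m\ge0}$ of $d(w)$-{\window}s for $w$ with $S_{m+1}$ a continuation of $S_m$ for every $m$.

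Next I would extract the $\infty$-{\window} by a diagonal. Writing $S_m=(s^m_i)_{m\le i\le m+d(w)}$ (each continuation shifting indices by one), set $\wpp_j:=s^j_j$, the head of $S_j$. To check that $(\wpp_j)_{j\ge0}$ is an $\infty$-{\window}, fix $j$ and use clause~$2$ of the {\window} definition for $S_j$ at its head (position $0<d(w)$): $\wpp_j=s^j_j\in\CCS(\Bam(w)\cup\Bbm(s^j_{j+1}))$. The one thing to verify is $\Bbm(s^j_{j+1})=\Bbm(\wpp_{j+1})$. Now $\wpp_{j+1}=s^{j+1}_{j+1}$ occupies the first overlapping coordinate of the continuation $S_j\to S_{j+1}$, i.e.\ position $1$ relative to $S_j$; hence \emph{Fact~2} of Lemma~\ref{prop-cont} (with $k=d(w)$) gives $d(s^{j+1}_{j+1}\setminus s^j_{j+1})\le0$, so these two sets differ only in formulas of degree $0$. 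Since every $\Box_b$-formula has degree $\ge1$, this forces $\Bbm(s^{j+1}_{j+1})=\Bbm(s^j_{j+1})$, and substituting yields $\wpp_j\in\CCS(\Bam(w)\cup\Bbm(\wpp_{j+1}))$, the defining clause of an $\infty$-{\window}.

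For the converse when $4(b)\not\in L$, given an $\infty$-{\window} $(\wpp_i)_{i\ge0}$ I would truncate at depth $d(w)$: for $0\le m\le\upw$ let $T_m$ consist of $\wpp_m,\dots,\wpp_{m+d(w)-1}$ capped by the top $\SF(\Bam(w))\cap\wpp_{m+d(w)}$, which lies in $\CCS(\Bam(w))$ by Prop.~\ref{prop-CCS}.\ref{Five}. Each $T_m$ is a $d(w)$-{\window}, and $T_{m+1}$ continues $T_m$ because at a shared non-cap index the required relation reads $\wpp_i\in\CCS(\Bbm(\wpp_{i+1})\cup\wpp_i)$, which is trivial since $\Bbm(\wpp_{i+1})\subseteq\wpp_i$ is part of the $\infty$-{\window} condition and $\CCS(\wpp_i)=\{\wpp_i\}$; the only genuine check is at the cap and follows from the \CCS-properties of Prop.~\ref{prop-CCS}. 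In the case $4(b)\in L$ the picture collapses: the parenthetical clause of the {\window} definition gives $\Bbm(w_j)\subseteq\Bbm(w_i)$ for $i\le j$, so along any $\infty$-{\window} the sets $\Bbm(\wpp_i)$ form a non-increasing chain that stabilizes, and using the idempotence $\Bbm(\Bbm(X))=\Bbm(X)$ valid under $4(b)$ this stabilization is already witnessed by a bounded (length-$2$) {\window} together with the stated containment; conversely that containment lets one loop on the stable $b$-part to reconstruct the infinite sequence. Here no long pigeonhole chain is needed, which is precisely why the threshold drops from $\upw$ to a {\window} of length $2$.

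The routine pieces are the counting bound and the periodic unfolding. The main obstacle is the diagonal verification in the case $4(b)\not\in L$: because continuations genuinely modify the {\window} entries, the naive diagonal would feed $\Bbm(s^j_{j+1})$ rather than $\Bbm(\wpp_{j+1})$ into the next coordinate, and one must confine the discrepancy between successive entries to degree-$0$ formulas so that the $\Box_b$-part is preserved. I expect this degree-stabilization argument, resting on \emph{Fact~2}, to be where the real work concentrates, rather than in the combinatorial loop detection.
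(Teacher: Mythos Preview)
Your approach is essentially the paper's: pigeonhole on the $\upw+1$ windows to find a repeat, unfold periodically, and take heads to form the $\infty$-{\window}; the paper leaves the head-verification to the reader (``As the reader may easily verify''), and your explicit appeal to \emph{Fact~2} of Lemma~\ref{prop-cont} is exactly how one fills that in. For the converse your capping of the top coordinate via Prop.~\ref{prop-CCS}.\ref{Five} is a mild refinement of the paper's bare truncation $T_i=(\wpp_j)_{i\le j\le i+d(w)}$, and your sketch of the $4(b)\in L$ case matches the paper's equally terse treatment.
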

\begin{proof}
First case: ($\Leftarrow$) All sets used in $d(w)$-{\window}s for $w$ have their size  bounded by $c_{0}.\lgw$, then there are at most $2^{c_{0}.(d(w)+1).\lgw}$ distinct $d(w)$-{\window}s for $w$.
Hence, there exists integers $h,\delta$ such that $\delta\neq 0$ and $h+\delta \leq 2^{c_{0}.(d(w)+1).\lgw}$ and $T_h=T_{h+\delta}$.
Let $(\Tpp_i)_{0\leq i}$ be the infinite sequence such that for all $i\leq h$, $\Tpp_i=T_i$ and for all $i>h$, $\Tpp_i=T_{h+((i-h)\!\!\!\mod \delta)}$. By construction, for all $i\geq 0$, $\Tpp_{i+1}$ is a continuation of $\Tpp_i$ for $w$.
For all $i\geq 0$, suppose that $\Tpp_i=(w^i_0,\cdots,w^i_{d(w)})$.
For all $i\geq0$, let $\wpp_i=w^i_0$.
As the reader may easily verify, $(\wpp_i)_{i\geq 0}$ is an infinite {\window} for $w$. ($\Rightarrow$) Obviously, for each $i\leq 0$ if we set $T_i=(\wpp_j)_{i\leq j\leq i+d(w)}$, then $(T_i)_{0\leq j\leq 2^{c_{0}.(d(w)+1).\lgw}}$ is the desired finite sequence of {\window}s for $w$ each being a continuation of the previous. \\
Second case: ($\Rightarrow$) immediate by setting $\wpp_0=w_0$ and for all $i\geq 1\colon \wpp_i=w_1$. ($\Leftarrow$) Since all $\wpp_i$ are subset of the finite set $\SF(w)$ there exists $0\leq i\leq j\leq 2^{c_0.\lgw}$ such that $\wpp_i=\wpp_j$. Then let $w_1=\wpp_i$ and $w_0=\wpp_0\cap(\SF(\Bbm(w_1)\cup\Bam(w)))$ and: on the one hand $\Bbm(w_1)\subseteq w_1$, and on another hand a) $w_1\in \CCS(\Bam(w))$ and b) since $\wpp_0\in\CCS(\Bam(w)\cup\Bbm(\wpp_1))$ and $\Bam(w)\cup\Bbm(w_1)=\Bam(w)\cup\Bbm(\wpp_i)\subseteq\Bam(w)\cup\Bbm(\wpp_1)$ then $w_0\in\CCS(\Bam(w)\cup\Bbm(w_1))$ by Prop.\ \ref{prop-CCS}.\ref{Five}. \\
\end{proof}
\section{Algorithm}
We first review the algorithm for $\Kdeab$ of \cite{BalGasq25} which runs as follows (initial call: $\sat(\chooseCCS(u)))$
\setlength{\textfloatsep}{0pt}
\setlength{\floatsep}{0pt}
\vspace{-1cm}
\begin{algorithm}[H]
 \floatname{algorithm}{Function}
\begin{algorithmic}
\caption{Test for \Kdeab-satisfiability of a set $w$: $w$ must be classically consistent and recursively each $\Diamond$-formula must be satisfied as well as all the dense-successors of $w$.}
\Function{\sat}{$w$}:
\State {return}
\State {\hspace{0.87cm}$w\neq \{\bot\}$}
\State {\hspace{0.2cm}\algand\ \all $\{$\sat(\chooseCCS$(\{\neg\phi\}\cup \Bbm(w))\colon \neg\Bb\phi\in w\}$}
\State {\hspace{0.2cm}\algand\ \all $\{\satW(\chooseW(w,\neg\phi),w,\upw)\colon\neg\Ba\phi\in w\}$}
\EndFunction
\end{algorithmic}
\end{algorithm}
\vspace{-1cm}

\begin{algorithm}[H]
\floatname{algorithm}{Function}
\begin{algorithmic}
\caption{Returns $\{\bot\}$ if $x$ is not classically consistent, otherwise returns one classically saturated open branch non-deterministically chosen}
\Function{\chooseCCS}{$x$}
\If  {$\CCS(x)\neq \emptyset$}
\State {return one $w\in \CCS(x)$}
\Else 
\State {return $\{\bot\}$}
\EndIf
\EndFunction
\end{algorithmic}
\end{algorithm}
\vspace{-1.5cm}

\begin{algorithm}[H]
\floatname{algorithm}{Function}
\begin{algorithmic}
\caption{Non-deterministically chooses a $d(w)$-{\window} for $w$ if possible (fig.\ \ref{{\window}1})}
\Function{\chooseW}{$w$,$\neg\phi$}
\If {there exists a $d(w)$-{\window} $(w_0,\cdots,w_{d(w)})$ for $w$ such that $\neg \phi \in w_0$}
\State {return $(w_0,\cdots,w_{d(w)})$}
\Else 
\State {return $(\{\bot\},\cdots,\{\bot\})$}
\EndIf
\EndFunction
\end{algorithmic}
\end{algorithm}
\vspace{-1.5cm}

\begin{algorithm}[H]
\floatname{algorithm}{Function}
\begin{algorithmic}
\caption{Tests the satisfiability of each dense-successor of a {\window} for $w$ and recursively for those of its continuation until a repetition happens or a contradiction is detected}
\Function{\satW}{$((w_0,\cdots,w_{d(w)})$,$w$,$N$}:
\If {$N=0$}
\State {return \true}
\Else
\State {return}
\State {\hspace{0.87cm} \sat$(w_0)$}
\State {\hspace{0.2cm} \algand\ \satW(\nextW$((w_0,\cdots,w_{d(w)}),w),w,N-1)$}
\EndIf
\EndFunction
\end{algorithmic}
\end{algorithm}

\begin{algorithm}[H]
\floatname{algorithm}{Function}
\begin{algorithmic}
\caption{Non-deterministically chooses a continuation of a {\window} for $w$ if possible (fig.\ \ref{{\window}})}
\Function{\nextW}{$T_0=(w_0,\cdots,w_{d(w)})$,$w$}
\If {there exists  a continuation $T_1$ of $T_0$ for $w$}
\State {return $T_1$}
\Else 
\State {return $(\{\bot\},\cdots,\{\bot\})$}
\EndIf
\EndFunction
\end{algorithmic}
\end{algorithm}

\noindent \framebox{\begin{minipage}{0.97\textwidth}   
\begin{figure}[H]
\centering
\begin{tikzpicture}[->, >=stealth, scale=0.7, transform shape,font=\large]

\tikzstyle{state}=[minimum size=0.8cm, inner sep=2pt]

\node[state] (w) at (10,3) {$w$};
\node[state] (w0) at (10,0) {$w_0$};
\node[state] (w1) at (8,0) {$w_1$};
\node[state] (w2) at (6,0) {$w_2$};
\node[state] (wd) at (2,0) {$w_{d(w)}$};
\node[state] (wd1) at (0,0) {$w_{d(w)+1}$};

\node[draw, dashed, rounded corners, fit=(wd1)(wd)(w2)(w1), inner sep=6pt, label=below:{\small Next $d(w)$-{\window} for $w$} once $\sat(w_0)$ has returned \true] {};

\draw (w) -- node[pos=0.7, left] {\small$a$} (wd1);
\draw (w) -- node[pos=0.7, left] {\small$a$} (wd);
\draw (w) -- node[pos=0.7, left] {\small$a$} (w2);
\draw (w) -- node[pos=0.7, right] {\small$a$} (w1);
\draw (w) -- node[pos=0.7, right] {\small$a$} (w0);

\draw[dashed] (wd1) -- node[below] {\small$b$} (wd);
\draw[dotted] (wd) -- node[below] {\small$b*$} (w2);
\draw[dashed] (w2) -- node[below] {\small$b$} (w1);
\draw[dashed] (w1) -- node[below] {\small$b$} (w0);

\end{tikzpicture}
\caption{Results of $\nextW$}
\label{{\window}}
\end{figure}
\end{minipage}
}\mbox{}\\

\section{Analysis}\label{section:complexity:of:KDeab}

Given a \Kdeab-model $M=(W,\Ra,\Rb,v)$ and a set $s$ of formulas, we will write $M,x\models s$ for $\forall \phi\in s\colon M,x\models \phi$. 

\begin{lemma}[Soundness, Lemma 13 of \cite{BalGasq25}]\label{soundness}\\
If $w$ is a \Kdeab-satisfiable (or just satisfiable) \CCS\ then the call \sat(w) returns \true. 
\end{lemma}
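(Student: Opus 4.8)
The plan is to prove soundness by induction on the modal degree $d(w)$, using the satisfying model to steer the non-deterministic choices (\chooseCCS, \chooseW, \nextW) into an accepting run, since for these non-deterministic procedures ``returns \true'' means that some run accepts. Fix a weakly dense model $M=(W,\Ra,\Rb,V)$ and a world $x$ with $M,x\models w$. As $w$ is a satisfiable \CCS, $\bot\notin w$, so the guard $w\neq\{\bot\}$ holds and it remains to satisfy the two \all-conjunctions. For the base case $d(w)=0$ the set $w$ is purely propositional, so it contains no $\neg\Bb\phi$ and no $\neg\Ba\phi$, both conjunctions are vacuously true, and \sat$(w)$ returns \true. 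For the inductive step I assume the statement for every satisfiable \CCS\ of degree strictly below $d(w)$.

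First I would handle the $b$-successors. For each $\neg\Bb\phi\in w$ we have $M,x\models\lozenge_{b}\neg\phi$, so there is $y$ with $x\Rb y$, $M,y\models\neg\phi$ and $M,y\models\Bbm(w)$; hence $\{\neg\phi\}\cup\Bbm(w)$ is satisfied at $y$. By Prop.~\ref{prop-CCS}.\ref{Six} the set $\SF(\{\neg\phi\}\cup\Bbm(w))\cap\{\psi:M,y\models\psi\}$ is a \CCS\ in $\CCS(\{\neg\phi\}\cup\Bbm(w))$ satisfied at $y$, and I let \chooseCCS\ return it. Since $\neg\Bb\phi\in w$ forces $d(\phi)\le d(w)-1$ and $d(\Bbm(w))\le d(w)-1$, this \CCS\ has degree $<d(w)$, so the induction hypothesis gives that the corresponding recursive call to \sat\ returns \true.

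The crux is the $a$-successors. For each $\neg\Ba\phi\in w$ we have $M,x\models\lozenge_{a}\neg\phi$, giving $z_0$ with $x\Ra z_0$ and $M,z_0\models\neg\phi$. Applying weak density to the edge $x\Ra z_i$ repeatedly produces $z_{i+1}$ with $x\Ra z_{i+1}$ and $z_{i+1}\Rb z_i$, i.e.\ an infinite backward $\Rb$-chain of $\Ra$-successors of $x$. Reading off each world with Prop.~\ref{prop-CCS}.\ref{Six} yields an $\infty$-{\window} $(\wpp_i)_{i\ge0}$ for $w$ with $\neg\phi\in\wpp_0$: indeed $M,z_i\models\Bam(w)$ because $x\Ra z_i$, and $M,z_i\models\Bbm(\wpp_{i+1})$ because $z_{i+1}\Rb z_i$. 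A descending induction along any finite sub-{\window} anchored by $\wpp_{k}\in\CCS(\Bam(w))$ shows every dense-successor has degree $\le d(w)-1$ (base: $d(\Bam(w))\le d(w)-1$; step: $\CCS(\Bam(w)\cup\Bbm(\wpp_{i+1}))$ keeps the degree $\le d(w)-1$ since $d(\Bbm(\wpp_{i+1}))\le d(w)-2$), and the same bound holds for the bottom built from $\{\neg\phi\}$ because $d(\neg\phi)\le d(w)-1$. By Lemma~\ref{corollary} (case $4(b)\not\in L$) the $\infty$-{\window} furnishes a finite sequence $(T_j)_{0\le j\le\upw}$ of $d(w)$-{\window}s, each a continuation of the previous, which is exactly the data consumed by \satW: \chooseW\ returns $T_0$ (with $\neg\phi\in w_0$) and each \nextW\ the next continuation. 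At every step \satW\ invokes \sat\ only on the bottom $w_0$ of the current {\window}, which is a satisfiable \CCS\ of degree $<d(w)$, so the induction hypothesis makes each such call return \true; since a continuation is available at every one of the $\upw$ iterations, \satW\ reaches $N=0$ and returns \true.

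The hard part will be the $a$-case, where three things must be checked simultaneously: that the weak-density chain really produces a genuine $\infty$-{\window} for $w$ with $\neg\phi$ in its bottom (so that \chooseW\ and \nextW\ can always follow the model), that every dense-successor reached by \satW\ stays at degree $<d(w)$ so the induction hypothesis applies, and that the fixed budget $\upw$ is never exhausted — the last point being precisely where Lemma~\ref{corollary} converts the infinite model-witness into the finite sequence of continuations that the algorithm actually unwinds.
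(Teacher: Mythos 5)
Your proof is correct and follows essentially the same route as the paper's own argument: induction on $d(w)$, using the satisfying model to steer the non-deterministic choices, Prop.~\ref{prop-CCS}.\ref{Six} to read off satisfiable \CCS s for the $\neg\Bb\phi$-successors, and, for each $\neg\Ba\phi$, the weak-density-generated infinite backward $R_b$-chain of $R_a$-successors of $x$ to supply \satW\ with windows whose bottoms are satisfiable \CCS s of degree $<d(w)$. The only cosmetic difference is that the paper truncates that chain directly into a finite sequence of $\upw+1$ satisfiable \CCS s and observes that consecutive $d(w)$-{\window}s are continuations of one another, whereas you package the same observation by first forming the $\infty$-{\window} and then invoking Lemma~\ref{corollary}.
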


\begin{lemma}[Completeness, Lemma 14 of \cite{BalGasq25}]\label{completeness}\\
Given a set $x$ of formulas, if \sat(\chooseCCS(x)) returns \true, then $x$ is \Kdeab-satisfiable. 
\end{lemma}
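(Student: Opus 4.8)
The plan is to prove, by induction on the degree $d(w)$, the stronger statement: \emph{if $w$ is a \CCS\ and the call \sat($w$) returns \true\ in some accepting non-deterministic run, then $w$ is \Kdeab-satisfiable.} The lemma follows at once: if \sat(\chooseCCS($x$)) returns \true\ then the chosen set $w=\chooseCCS(x)\neq\{\bot\}$, so $w\in\CCS(x)$ and $x\subseteq w$, whence any model of $w$ is a model of $x$. The base case $d(w)=0$ is immediate: $w$ then contains no modal formula, the two \all-sets in \sat\ are empty, and $w$ — being a \CCS, hence propositionally consistent — is satisfiable in the single-world frame, which is vacuously weakly dense.

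For the inductive step I read the model off the accepting run of \sat($w$), which supplies two families of successors. First, for every $\neg\Bb\phi\in w$ the run fixes an accepting call \sat(\chooseCCS($\{\neg\phi\}\cup\Bbm(w)$)); the chosen \CCS\ $w'_\phi$ has degree $<d(w)$, so by the induction hypothesis it is satisfiable, at the root $r_\phi$ of a model $M_\phi$, with $\Bbm(w)\subseteq w'_\phi$. Second, for every $\neg\Ba\phi\in w$ the run makes \satW\ succeed through all $\upw$ iterations, each supplying a continuation via \nextW\ and each calling \sat\ on the leftmost component of the current {\window}. This is exactly the situation of Lemma \ref{corollary} (case $4(b)\not\in L$, which applies to \Kdeab): a length-$\upw$ sequence of $d(w)$-{\window}s for $w$, each a continuation of the previous, yields an $\infty$-{\window} $(\wpp_i)_{i\geq0}$ for $w$. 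The key point is that in that construction each $\wpp_i$ is the leftmost component of one of the windows, i.e.\ precisely one of the sets on which \sat\ was invoked and returned \true; by periodicity only finitely many distinct such sets occur, each of degree $<d(w)$, so the induction hypothesis makes every $\wpp_i$ satisfiable, at the root $r_i$ of a model $M_i$. Moreover \chooseW($w,\neg\phi$) guarantees $\neg\phi\in\wpp_0$.

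I then assemble a model $M$ for $w$: take a fresh root $r$ whose valuation agrees with $w$ on atoms, take disjoint copies of all the $M_\phi$ and $M_i$, and add the edges $r\,R_b\,r_\phi$ for each $\neg\Bb\phi\in w$, the edges $r\,R_a\,r_i$ for all $i$ (repeating this for each $\neg\Ba\phi\in w$), and the descending chain $r_{i+1}\,R_b\,r_i$ coming from the $\infty$-{\window}. The truth lemma, $r\models\psi$ for every $\psi\in w$, is checked from the propositional saturation of $w$ together with the defining clauses of a {\window}: if $\Ba\psi'\in w$ then $\psi'\in\Bam(w)\subseteq\wpp_i$, so every $a$-successor $r_i$ satisfies $\psi'$ by the induction hypothesis; dually for $\Bb\psi'\in w$ using $\Bbm(w)\subseteq w'_\phi$; the diamond $\neg\Ba\phi$ is witnessed by $r_0$ (since $r\,R_a\,r_0$ and $r_0\models\neg\phi$) and $\neg\Bb\phi$ by $r_\phi$. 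The internal edges $r_{i+1}\,R_b\,r_i$ respect all $b$-boxes because $\wpp_i\in\CCS(\Bam(w)\cup\Bbm(\wpp_{i+1}))$ gives $\Bb\psi''\in\wpp_{i+1}\Rightarrow\psi''\in\Bbm(\wpp_{i+1})\subseteq\wpp_i$. Finally, weak density holds: the only $a$-edges outside the submodels are the $r\,R_a\,r_i$, and for each one $r_{i+1}$ is the required intermediate world, since $r\,R_a\,r_{i+1}$ and $r_{i+1}\,R_b\,r_i$; the infiniteness of the {\window} is exactly what guarantees such an $r_{i+1}$ always exists, while the submodels $M_\phi,M_i$ are weakly dense by the induction hypothesis.

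The main obstacle is the simultaneous verification of the truth lemma and of weak density for the assembled (infinite) model. The delicate point is that the single $\Bam/\Bbm$-propagation condition built into a {\window} must line up with every edge we add at once: it has to make each $a$-successor validate the $a$-boxes of $w$, make the descending $b$-chain validate the relevant $b$-boxes, and supply the density witnesses. Equally careful is tying the worlds of the $\infty$-{\window} produced by Lemma \ref{corollary} back to the specific sets on which \sat\ was actually called during the \satW\ loop, so that the induction hypothesis is genuinely available for each $\wpp_i$.
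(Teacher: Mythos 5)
Your overall strategy is exactly the paper's: induction, extraction of an $\infty$-{\window} from the successful $\satW$ loop via Lemma~\ref{corollary}, gluing of submodels obtained from the recursive calls, with the window chain supplying both the $b$-edges and the weak-density witnesses. But there is a genuine gap in your truth lemma. Your induction hypothesis only yields that each $\wpp_i$ is satisfiable at the root $r_i$ of \emph{some} \Kdeab-model $M_i$, and you then conclude that $r_i$ still satisfies every $\psi'\in\wpp_i$ \emph{in the glued model} $M$. That inference is unsound: the chain edge $(r_{i+1},r_i)\in R_b$ gives $r_{i+1}$ a new outgoing $b$-successor, so the submodel generated by $r_{i+1}$ in $M$ is not $M_{i+1}$, and truth at $r_{i+1}$ need not survive. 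Concretely, let $w$ contain $\Ba\lozenge_{a}\Bb\neg t$ (plus one $\neg\Ba$-diamond to trigger the window machinery); then every window component is just $\{\lozenge_{a}\Bb\neg t\}$ and contains no $\Bb$-formula, so your label-coherence check ($\Bbm(\wpp_{i+1})\subseteq\wpp_i$) is vacuous. Now choose adversarially: for $j\geq 1$ let $M_j$ be the one-world model whose root $r_j$ carries an $R_a$- and an $R_b$-self-loop with $t$ false (it is weakly dense and satisfies its label), and let $M_0$ make $t$ \emph{true} at $r_0$. In $M$, the only $a$-successor of $r_1$ is $r_1$ itself, and $r_1\not\models\Bb\neg t$ because of its new successor $r_0$; hence $r_1\not\models\lozenge_{a}\Bb\neg t$ and $r\not\models\Ba\lozenge_{a}\Bb\neg t$, even though $w$ is satisfiable and every piece came from the induction hypothesis. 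The failure is caused by a $\Bb$-formula that is true at $r_1$ \emph{without being in the label} $\wpp_1$, reached through a cycle of $M_1$ back to its root; black-box satisfiability plus coherence of the labels along the new edges does not control such formulas.

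The paper avoids this by never using satisfiability as a black box: its model is built entirely out of sets of formulas (every world \emph{is} a \CCS\ produced by some call to $\sat$), every edge of the whole structure---including those inside the recursively constructed parts---satisfies the propagation conditions ($\Bam(v)$ included in the target of every $a$-edge from $v$, $\Bbm(v)$ in the target of every $b$-edge), and the truth lemma is the global statement ``for every world $v$ and every $\phi\in v$, $M,v\models\phi$'', proved by a single structural induction on $\phi$ ranging over all worlds at once; the chain edges are then just one more case of that induction. Your argument can be repaired either by adopting this formulation, or more locally: first replace each $M_i$ by a variant in which the root has no incoming edge (duplicate the root with the same valuation and outgoing edges; this preserves weak density and the root's theory), so that truth at interior worlds is untouched by the gluing, and then establish truth at the chain roots by induction on the structure of the formula descending the chain (using $\Bb\chi\in\wpp_{i+1}\Rightarrow\chi\in\wpp_i$), rather than by appeal to the induction hypothesis on degree. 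As written, the step ``every $a$-successor $r_i$ satisfies $\psi'$ by the induction hypothesis'' is where the proof breaks.
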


\begin{figure}[htbp]
\centering
\begin{tikzpicture}[
    bullet/.style={circle, fill=black, inner sep=1.2pt},
    -, >=Stealth, font=\large,scale=0.9]

\node[bullet] (n1) at (0,0) {};
\node (u) at (0.3,0) {$u$};

\node[bullet] (n7) at (3.4,-1.5) {};
\node[bullet] (n71) at (3.4,-1.5) {};
\node[bullet] (n6) at (2.6,-1.5) {};
\node[bullet] (n6.1) at (2.8,-1.5) {};
\node[bullet] (n6.2) at (3,-1.5) {};
\node[bullet] (n5) at (2.2,-1.5) {};
\node[bullet] (n4) at (1.4,-1.5) {};
\node[bullet] (n4.1) at (1.6,-1.5) {};
\node[bullet] (n4.2) at (1.8,-1.5) {};

\node[bullet] (n2) at (-2,-1.5) {};
\node[bullet] (n3) at (0,-1.5) {};

\node[bullet] (n10)  at (0,-3) {};
\node[bullet] (n9)  at (-1,-3) {};
\node[bullet] (n8) at (-2,-3) {};
\node[bullet] (n11) at (0.9,-3) {};
\node[bullet] (n11.1) at (0.2,-3) {};
\node[bullet] (n11.2) at (0.4,-3) {};
\node[bullet] (n12) at (1.4,-3) {};
\node[bullet] (n12.1) at (1.6,-3) {};
\node[bullet] (n12.2) at (1.8,-3) {};
\node[bullet] (n13) at (2.2,-3) {};

\node[bullet] (n14) at (0,-4.5) {};
\node (w) at (0.3,-4.5) {$w$};

\node[bullet] (n17) at (0,-6) {};
\node[bullet] (n17.1) at (0.2,-6) {};
\node[bullet] (n17.2) at (0.4,-6) {};
\node (w0) at (0,-6.5) {$w_0$};

\node[bullet] (n16) at (-1,-6) {};
\node[bullet] (n15) at (-1.8,-6) {};
\node[bullet] (n15.1) at (-1.6,-6) {};
\node[bullet] (n15.2) at (-1.8,-6) {};
\node[bullet] (n18) at (0.9,-6) {};
\node (wk) at (1.2,-6.5) {$w_{d(w)}$};

\node[bullet] (n19) at (2,-6) {};
\node[bullet] (n19.1) at (2.2,-6) {};
\node[bullet] (n19.2) at (2.4,-6) {};
\node[bullet] (n20) at (2.8,-6) {};
\node[bullet] (n21) at (1.6,-6) {};

\draw[sloped] (n1) to[bend right=20](n2) ;
\foreach \target in {4,4.1,4.2,5,6,6.1,6.2,71} {
  \draw[densely dotted][sloped] (n1) to[bend left=20](n\target);
}
\draw (n1)   to[bend left=10](n3) ;

\foreach \target in {8,9} {
  \draw[sloped] (n3) to[bend right=20] (n\target);
}
\foreach \target in {12,12.1,12.2,13} {
\draw[sloped][densely dotted] (n3) to[bend left=20] (n\target);
}
\draw [dotted](n3)   to[bend left=10](n10) ;
\foreach \target in {11,11.1,11.2} {
\draw [sloped][densely dotted](n3)   to[bend left=10](n\target);
}
\draw[dashed] (n10) -- (n14) node[midway, left] {};

\foreach \target in {15,15.1,15.2,16} {
  \draw[densely dotted][sloped] (n14) to[bend right=20](n\target) ;
}
\foreach \target in {19,19.1,19.2,20} {
  \draw[densely dotted][sloped] (n14) to[bend left=20](n\target) ;
}

\draw [densely dotted](n14)   to[bend left=10](n17) ;
\foreach \target in {17.1,17.2,18} {
\draw [densely dotted](n14)   to[bend left=10](n\target);
}

\draw[] (n14) to[bend left=20] (n21);


\node[fill=black!10, draw, rounded corners, fit=(n4)(n5), inner sep=2pt, label=below:] {};
\node[fill=black!10, draw, rounded corners, fit=(n6)(n7), inner sep=2pt, label=below:] {};
\node[draw, rounded corners, fit=(n10)(n11), inner sep=2pt, label=below:] {};
\node[fill=black!10, draw, rounded corners, fit=(n12)(n13), inner sep=2pt, label=below:] {};
\node[draw, rounded corners, fit=(n17)(n18), inner sep=2pt, label=below:] {};
\node[fill=black!10, draw, rounded corners, fit=(n19)(n20), inner sep=2pt, label=below:] {};
\node[fill=black!10, draw, rounded corners, fit=(n15)(n16), inner sep=2pt, label=below:] {};

\node[draw, dotted, rounded corners, fit=(n1)(n3)(n10)(n14)(n18), inner sep=5pt, label=below:] {};

\node[bullet] (x14) at (5,-4.5) {};
\node (w) at (5.3,-4.5) {$w$};

\node[bullet] (x171) at (5.2,-6) {};
\node[bullet] (x172) at (5.4,-6) {};
\node[bullet] (x173) at (5.6,-6) {};
\node (wp1) at (5.3,-6.4) {$\wpp_1$};

\node[bullet] (x18) at (6.2,-6) {};
\node (wpk1) at (6.6,-6.4) {$\wpp_{d(w)+1}$};

\foreach \target in {171,172,173,18} {
\draw [densely dotted](x14)   to[bend left=10](x\target);
}

\node[draw, rounded corners, fit=(x171)(x172)(x173)(x18), inner sep=2pt, label=below:] {};
\node (ww0) at (4.7,-6.4) {$w_0$};
\node (cross0) at (4.8,-6) {$\times$};
\node[bullet] (x17) at (4.8,-6) {};
\draw [densely dotted](x14)  to [bend left=10]node[midway]{$\times$} (x17)  ;
\node[] (f2) at (4.6,-5.3) {};
\node[] (f1) at (2.8,-5.3) {};
\draw[line width=0.5mm,double,->](f1) -- (f2) node[midway, below] {$\nextW$};

\end{tikzpicture}
\caption{A view of the computation tree of $\sat(u)$ when has just been executed a call $\satW((w_0,\cdots,w_{d(w)}),w,\upw)$. Solid lines are $b$-edges, dotted ones are $a$-edges. Small boxes are {\window}s. The big dotted {\window} shows the part stored in memory. On the right, $(\wpp_1,\wpp_2,\cdots,\wpp_{d(w)+1})$ is a  continuation of $(w_0,\cdots,w_{d(w)})$ for $w$, which will be explored once $\sat(w_0)$ will have returned \true\ ($w_0$ can be forgotten).\\ }
\label{memory}
\end{figure}
\begin{lemma}[Lemma 15 of \cite{BalGasq25}]
$\sat(w)$ runs in polynomial space w.r.t.\ $\lgw$.    
\end{lemma}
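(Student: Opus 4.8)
The plan is to bound, by a polynomial in $\lgw$, the space consumed along any single nondeterministic computation path of $\sat(w)$; since $\chooseCCS$, $\chooseW$ and $\nextW$ are nondeterministic, this places the problem in $\NPSPACE$, hence in $\PSPACE$ by Savitch's theorem. I would organize the argument around three quantities: the size of the individual sets of formulas being manipulated, the depth of genuinely nested $\sat$-calls, and the handling of the exponentially large counter $N$.

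First I would establish that \emph{every} set of formulas occurring anywhere in the computation is a subset of $\SF(w)$. This holds because $\SF$ is closed both under the classical saturation rules defining $\CSF$ (so every element of $\CCS(u)$ with $u\inc\SF(w)$ stays inside $\SF(w)$) and under the modality-stripping operations $\Bam,\Bbm$. As $\SF(w)$ contains only $\bigo(\lgw)$ distinct formulas, each of symbol-length at most $\lgw$, any such set has size $\bigo(\lgw^2)$. Consequently a single $d(w)$-{\window}, being a sequence of $d(w)+1=\bigo(\lgw)$ such sets, occupies $\bigo(\lgw^3)$ space, and the counter $N\le\upw=2^{c_{0}.(d(w)+1).\lgw}$, kept in binary, occupies only $c_{0}.(d(w)+1).\lgw=\bigo(\lgw^3)$ bits. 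Thus each individual activation record of $\sat$, $\satW$, $\chooseW$ or $\nextW$ uses polynomial local space. Next I would bound the depth of nested $\sat$-calls: every $\sat$-call issued from within $\sat(w)$ is on a set of strictly smaller degree, since the $b$-successor $\{\neg\phi\}\cup\Bbm(w)$ and every {\window} component $w_0\in\CCS(\Bam(w)\cup\Bbm(w_1))$ both have degree at most $d(w)-1$; hence the chain of nested $\sat$-calls has length at most $d(w)\le\lgw$.

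The main obstacle, and the crux of the argument, is that $\satW$ invokes itself up to $N\le\upw$ times, i.e.\ potentially exponentially often, so a naive recursive reading would produce an exponentially deep call stack. Here I would exploit the fact that the self-call $\satW(\nextW(T_0,w),w,N-1)$ occurs in \emph{tail position} inside the lazy conjunction: once $\sat(w_0)$ has returned \true, neither $w_0$ nor the current {\window} $T_0$ is needed any longer, and the computation proceeds with the continuation {\window} and the decremented counter. This is precisely the situation depicted in Figure~\ref{memory}. Therefore $\satW$ can be realised as a loop that overwrites its single stored {\window} by its continuation at each step, so that at a fixed recursion level only one {\window} and one counter are live, together with the space of the single currently active nested call $\sat(w_0)$. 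The lazy evaluation of the two $\all\{\cdots\}$ conjunctions inside $\sat$ lets the polynomially many conjuncts be tested one at a time, again reusing space.

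Finally I would assemble these observations into a recurrence. Writing $S(d)$ for the space used by $\sat$ on a set of degree $d$, the points above give $S(d)\le S(d-1)+c.\lgw^3$ for a constant $c$, so that $S(d(w))=\bigo(\lgw\cdot\lgw^3)=\bigo(\lgw^4)$, which is polynomial. The delicate point to get right is the accounting in the tail-recursion step: one must check that discarding the previous {\window} is legitimate, which it is, because soundness and completeness (Lemmas~\ref{soundness} and~\ref{completeness}) only require the \emph{existence} of a continuation at each step, never the simultaneous retention of all of them.
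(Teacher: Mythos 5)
Your proposal is correct and follows essentially the same route as the paper's proof: lazy evaluation of the conjunctions, a polynomial bound on the size of {\window}s and of the binary counter $N$, the observation that the self-call of $\satW$ is a tail call so that the current {\window} can be discarded and space reused across the up-to-$\upw$ iterations, and a final recurrence over the strictly decreasing degree of nested $\sat$-calls. The only differences are cosmetic: you bound each set via $\SF(w)$ (giving $\bigo(\lgw^4)$ overall) where the paper uses the constant $c_{0}$ to get per-set size $\bigo(\lgw)$ and concludes $\bigo(\lgw^3)$, both of which are polynomial.
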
 
Fig.\ \ref{memory} nd the proof are provided in order to enlighten how windows work. 
\begin{proof}
    First, we recall that functions \all\ and \algand\ are lazily evaluated. \\
    Obviously, \chooseCCS\ runs in polynomial space. 
    On another hand, the size of each $d(w)$-{\window} for $w$ is bounded by $d(w).\lgw$, hence by $\lgw^2$ since $d(w)\leq \lgw$. Thus the functions \chooseW\ and \nextW\ run in polynomial space, namely $\mathcal{O}(\lgw^2)$. It is also clear that functions $\sat$ and $\satW$ terminate since their recursion depth is bounded (respectively by $\lgw$ and $\lgN$) as well as their recursion width.
    Among all of these calls, let $\wpp$ be the argument for which $\sat$ has the maximum cost in terms of space, i.e.\ such that $space(\sat(\wpp)$ is maximal.\\ 
    Let $T_0=(w_0,\cdots,w_{d(w)})$ be a $d(w)$-{\window} for $w$. Let us firstly evaluate the cost of $space(\satW(T_0,w,N))$. For $0\leq i<N$, let $T_{i+1}$ be the result of $\nextW(T_i,w)$ (note that $\lgx{T_i}=\lgx{T_0}$). The function $\satW$  keeps its arguments in memory during the call $\sat(w_0)$ and either terminate or forget them and continue, hence: \\
    $space(\satW(T_0,w,N))$ \\
    $\begin{array}{lll}
    \leq& \max \{&\lgx{T_0}+\lgw+\lgN+space(\sat(w_0)),\\
    &&space(\satW(T_1,w,N-1))\}\\
    \leq& \max \{&\lgx{T_0}+\lgw+\lgN+space(\sat(\wpp)),\\
    &&\lgx{T_1}+\lgw+\lgx{N-1}+space(\satW(T_2,w,N-1))\}\\
    \leq& \max \{&\lgx{T_0}+\lgw+\lgN+space(\sat(\wpp)),\\
    &&\lgx{T_0}+\lgw+\lgx{N-1}+space(\sat(\wpp)),\\
    &&\cdots\\
    &&\lgx{T_0}+\lgw+\lgx{0}\}\\
    
     \leq&& \lgx{T_0}+\lgw+\lgN+space(\sat(\wpp))
     \end{array}$
    
    \noindent Since $N\leq \upw$ and $\lgx{T_0}\leq \lgw^2$, $space\satW(T_0,w,N)$ is bounded by $c'.\lgw^2+space(\sat(\wpp))$ for some constant $c'>0$. \\
    Now, concerning the function $\sat$, it also keeps track of its argument in memory during recursion in order to range over its $\Diamond$-formulas. Thus:\\
    $space(\sat(w))$\\
$\begin{array}{ll}
    &  \leq \lgw+\max \{space(\sat(\wpp)),c'.\lgw^2+space(\sat(\wpp))\}\\
    & \leq (c'+1).\lgw^2+space(\sat(\wpp))
    \end{array}$
    
With respect to the size of the arguments (and since $\lgx{\wpp}\leq \lgw$) we are left with a recurrence equation of the form: 
    $space(\lgw)\leq space(\lgw-1)+(c'+1).\lgw^2$ with $space(0)=1$ which yields $space(\sat(\lgw))={\mathcal O}(\lgw^3)$.
\end{proof}
\begin{theorem}
    $DP_{a,b}$ is $\PSPACE$-complete. 
\end{theorem}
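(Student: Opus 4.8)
The plan is to establish the two matching bounds separately: membership in $\PSPACE$ and $\PSPACE$-hardness.

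For the upper bound, I would first reduce validity to non-satisfiability: a formula $\phi$ is valid in the class of all weakly dense frames \ssi $\neg\phi$ is not \Kdeab-satisfiable, equivalently \ssi the singleton $\{\neg\phi\}$ has no \Kdeab-satisfiable \CCS. By Lemma~\ref{soundness} and Lemma~\ref{completeness}, the set $\{\neg\phi\}$ is satisfiable \ssi the nondeterministic call $\sat(\chooseCCS(\{\neg\phi\}))$ admits a run returning \true. By the space lemma above (Lemma~15 of \cite{BalGasq25}), this call uses only space $\bigo(\lgx{\neg\phi}^{3})$, so the satisfiability problem lies in $\NPSPACE$. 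I would then invoke Savitch's theorem, $\NPSPACE=\PSPACE$, together with the closure of $\PSPACE$ under complementation, to conclude that the validity problem $DP_{a,b}$ itself is in $\PSPACE$. This in particular supersedes the $\coNEXPTIME$ bound recorded earlier.

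For the lower bound, I would reduce from the validity problem of unimodal $\K$, which is $\PSPACE$-complete by Ladner~\cite{Ladner77}. Given a unimodal formula $\psi$ over a single box $\Box$, let $\psi^{b}$ be the bimodal formula obtained by replacing every occurrence of $\Box$ by $\Box_b$; this is clearly a linear, hence polynomial, transformation. The claim is that $\psi$ is $\K$-valid \ssi $\psi^{b}$ is valid in the class of all weakly dense frames. The key observation is that $\psi^{b}$ mentions only $R_b$, while weak density constrains $R_a$ relative to $R_b$ but leaves $R_b$ arbitrary: for every $\K$-frame $(W,R_b)$ the tuple $(W,\emptyset,R_b)$ is vacuously weakly dense and has the same $\Box_b$-reduct, and conversely every weakly dense frame yields an arbitrary $\K$-frame by forgetting $R_a$. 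Since the truth value of a $\Box_b$-only formula at a world does not depend on $R_a$, the formula $\psi^{b}$ is refuted at some world of some weakly dense frame \ssi $\psi$ is refuted at some world of some $\K$-frame, which gives the equivalence and the desired reduction.

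The bulk of the work is already carried by the soundness, completeness and space lemmas, so the residual content of the theorem is simply the assembly of these pieces. The only genuinely delicate point is the correctness of the hardness reduction, i.e.\ verifying that restricting attention to $\Box_b$ neither loses nor gains validities when passing between arbitrary $\K$-frames and weakly dense frames; as noted above, this reduces to the twin remarks that taking $R_a=\emptyset$ satisfies weak density vacuously and that the satisfaction of a $\Box_b$-only formula is independent of $R_a$. Once this is in place, $\PSPACE$-hardness is immediate from Ladner's theorem, and combining it with the membership argument yields $\PSPACE$-completeness of $DP_{a,b}$.
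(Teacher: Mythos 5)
Your proposal is correct and takes essentially the same route as the paper: membership via the nondeterministic $\sat$ procedure running in polynomial space, then Savitch's theorem and closure of $\PSPACE$ under complementation; hardness by reduction from unimodal $\K$, whose validity problem is $\PSPACE$-complete by Ladner. The paper compresses the hardness part into the bare assertion that $\Kdeab$ is a conservative extension of $\K$; your explicit translation $\Box\mapsto\Box_b$ together with the observation that $R_a=\emptyset$ satisfies weak density vacuously is exactly the justification that assertion requires, so you have simply filled in a detail the paper leaves implicit.
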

\begin{proof}
    On the one hand, $DP_{a,b}$ is $\PSPACE$-hard since it is a conservative extension of $\K$; on the other hand, our function $\sat$ can decide non-deterministically and within polynomial space whether a set of formulas is $\Kdeab$-satisfiable, $\Kdeab$-satisfiability is in $\NPSPACE$, i.e.\ in $\PSPACE$ (by Savitch' theorem. Thus $DP_{a,b}$ is in co-$\PSPACE$ which is equal to $\PSPACE$. 
\end{proof}

    \section{$\Kdeab+$ transitivity}\label{extension:to:transitivity}
    Now we consider logics $L$ among \KQdeab\ and \KQQdeab. Recall that in the case $4(b)\in L$, {\window}s are just $2$-{\window}, so we need to modify functions $\chooseW$ and $\nextW$ in accordance. We need also to modify functions $\sat$ and $\satW$ for dealing with contexts as for \KQab. Function $\nextW$ is unchanged but unused if $4(b)\in L$. Function $\chooseCCS$ is unchanged. Given $u_0$ and $w_0\in\chooseCCS(u_0)$ if it exists (otherwise $u_0$ is unsat), the initial call is $\sat((u_0,\neg\bot),w_0)$. The last context pushed in the stack $\Sigma$ is $last(\Sigma)$. 

\setlength{\textfloatsep}{0pt}
\setlength{\floatsep}{0pt}
\begin{algorithm}[H]
 \floatname{algorithm}{Function}
\begin{algorithmic}
\caption{Test for $L$-satisfiability of a set $w$: $w$ must be classically consistent and recursively each $\Diamond$-formula must be satisfied as well as all the dense-successors of $w$. }
\Function{\sat}{$\Sigma,w$}:
\State {return}
\State {\hspace{0.87cm}$last(\Sigma)\not\in\Sigma$}
\State {\hspace{0.2cm}$\algand\ w\neq \{\bot\}$}
\State {\hspace{0.2cm}$\algand\ \all\{\sat(\Sigma.(u,\neg\psi),\chooseCCS(u\cup\{\neg\psi\}))\colon \neg\Box_b\psi\in w, u=\Box_b^{\mhyphen}(w)$}
\State {\hspace{0.2cm}\algand\ \all $\{\satW(\Sigma.(u,\neg\psi),\chooseW(w,\neg\psi),w,\upw)\colon\neg\Ba\psi\in w\}$}
\EndFunction
\end{algorithmic}
\end{algorithm}
\vspace{-1cm}

\begin{algorithm}[H]
\floatname{algorithm}{Function}
\begin{algorithmic}
\caption{Non-deterministically chooses a $2$ or $d(u)$-{\window} for $w$ depending on whether $4(b)\in L$ or not.}
\Function{\chooseW}{$w$,$\neg\psi,B$}
\If{$4(b)\in L$}
\If {$(w_0,w_{1})$ is a $2$-{\window} for $w$ such that $\neg \psi \in w_0$ and $\Bbm(w_1)\subseteq w_0$}
\State {return $(w_0,w_1)$}
\Else 
\State {return $(\{\bot\},\{\bot\})$}
\EndIf
\Else
\If { $(w_0,\cdots,w_{d(u)})$ is a $d(u)$-{\window} for $w$ such that $\neg \psi \in w_0$ and $\Bbm(w_1)\subseteq w_0$}
\State{return $(w_0,\cdots,w_{d(u)})$}
\Else
\State{return $(\{\bot\},\cdots,\{\bot\})$}
\EndIf
\EndIf
\EndFunction
\end{algorithmic}
\end{algorithm}
\vspace{0cm}

\begin{algorithm}[H]
\floatname{algorithm}{Function}
\begin{algorithmic}
\caption{Tests the satisfiability of the dense-successor of a {\window} for $w$ and recursively for those of its continuation until a repetition happens or a contradiction is detected}
\Function{\satW}{$\Sigma$,$((w_0,\cdots,w_k)$,$w$,$N$}: \mbox{$\#k=1$ if $4(b)\in L$, $k=d(u)$ otherwise}
\If {$N=0$}
\State {return \true}
\Else
\If {$4(b)\in L$}
\State {\hspace{0.2cm} return $\sat(\Sigma,w_0)\, \algand\, \sat(\Sigma,w_1)$}
\Else
\State {\hspace{0.2cm} return $\sat(\Sigma,w_0)\,\algand\,\satW(\nextW((w_0,\cdots,w_k),w),w,N-1)$}
\EndIf
\EndIf
\EndFunction
\end{algorithmic}
\end{algorithm}

\begin{algorithm}[H]
\floatname{algorithm}{Function}
\begin{algorithmic}
\caption{Non-deterministically chooses a continuation $T_1=(w'_0,\cdots,w'_{d(w)})$ of $T_0$ (fig.\ \ref{{\window}}) and returns the pair $C$,$T_1$ where $C$ is the context for $w'_0$}
\Function{\nextW}{$T_0=(w_0,\cdots,w_{d(w)})$,$w$}
\If {there exists  a continuation $T_1$ of $T_0$ for $w$}
\State {return $\Bam(w)\cup\Bbm(w_1),T_1$}
\Else 
\State {return $\emptyset,(\{\bot\},\cdots,\{\bot\})$}
\EndIf
\EndFunction
\end{algorithmic}
\end{algorithm}

\noindent The soundness proof transfers almost straightforwardly. 
\begin{lemma}[Soundness]\label{soundness2}\\
If $w$ is a L-satisfiable \CCS\ then the call \sat(w) returns \true. 
\end{lemma}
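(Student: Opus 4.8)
The plan is to mirror the proof of Lemma~\ref{soundness} (induction following a fixed satisfying model, tracking a witnessing world at each recursion node), adapting only the two places where transitivity interferes: the fact that $d(w)$ no longer strictly decreases along $\Ba$-propagation, since with $4(a)\in L$ the set $\Bam(w)$ retains the formulas $\Ba\phi$, and the collapse of windows to $2$-windows when $4(b)\in L$. Concretely, I would fix an $L$-model $M=(W,\Ra,\Rb,V)$ and a world $x$ with $M,x\models w$; since $4(a)\in L$ (resp.\ $4(b)\in L$) I may assume $\Ra$ (resp.\ $\Rb$) transitive. I would then prove the strengthened statement that, for every stack $\Sigma$ whose contexts trace a genuine path of $M$ ending at a world $y$ with $M,y\models w'$, the call $\sat(\Sigma,w')$ returns $\true$; this strengthening is what carries the model-tracking information through the recursion, exactly as in Lemma~\ref{soundness}.

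Because $4(a)$ (and possibly $4(b)$) blocks the degree from decreasing, the induction can no longer be on $d(w')$ alone. I would replace it by induction along the \emph{finite} recursion tree, finiteness being guaranteed by the loop-detection guard $last(\Sigma)\not\in\Sigma$ together with the context bound of Lemma~\ref{limit:recursion:depth} adapted to the present setting (only $\bigo(\lgw^{4})$ distinct contexts can occur on a branch before a repetition is forced). At a node the guard $w'\neq\{\bot\}$ holds because $w'$ is satisfiable, and if the freshly pushed context already occurs below the top of $\Sigma$ the branch closes successfully by the loop rule; so only the fresh-context case requires work, and it is handled by producing the required successors from $M$.

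For a $\Bb$-diamond $\neg\Bb\psi\in w'$ I would use $M,y\models\neg\Bb\psi$ to obtain $z$ with $y\Rb z$, $M,z\models\neg\psi$ and $M,z\models\Bbm(w')$ --- here $\Bbm(w')$ already contains the surviving $\Bb\chi$ when $4(b)\in L$, which is precisely why the context, not the degree, now controls termination. By Prop.~\ref{prop-CCS}.\ref{Six} the set $\SF(\{\neg\psi\}\cup\Bbm(w'))\cap\{\phi:M,z\models\phi\}$ is a satisfiable $\CCS$ lying in the range of $\chooseCCS(\Bbm(w')\cup\{\neg\psi\})$, so the recursive call falls under the induction hypothesis. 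For an $\Ba$-diamond $\neg\Ba\psi\in w'$ I would reproduce the weak-density construction of Lemma~\ref{soundness}: build from $M,y\models\neg\Ba\psi$ and weak density an $\infty$-{\window} rooted at $y$, and realize it by the finite continuation-sequence produced by $\chooseW$/$\nextW$ of length at most $\upw$ (Lemma~\ref{corollary}); when $4(b)\in L$ this collapses to the $2$-{\window} $(w_0,w_1)$ with $\Bbm(w_1)\subseteq w_0$, and $\satW$ reduces to $\sat(\Sigma,w_0)\,\algand\,\sat(\Sigma,w_1)$, both arguments being satisfiable $\CCS$s supplied by the window and handled by the induction hypothesis.

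The main obstacle I expect is making the $\Ba$-case honest \emph{without} degree induction: one must check that the context pushed for an $\Ba$-diamond, $(\Bam(w'),\neg\psi)$, genuinely tracks an $\Ra$-path of $M$, so that the loop guard closes only truly recurring situations, and that the {\window}/continuation machinery of Lemma~\ref{corollary} still yields satisfiable dense-successors when $\Bam$ and $\Bbm$ retain boxed formulas under transitivity (the monotonicity note $\Bbm(w_j)\subseteq\Bbm(w_i)$ for $i\le j$). Once this compatibility between the window construction and the context-based termination is verified, the remaining work is the same bookkeeping as in Lemma~\ref{soundness}, which is why the transfer is ``almost straightforward''.
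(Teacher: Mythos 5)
Your proposal follows essentially the same route as the paper's own proof: fix a satisfying $L$-model, abandon degree induction (blocked because $\Bam$, $\Bbm$ retain boxed formulas under transitivity) in favour of induction on the finite, loop-bounded recursion structure, use Prop.\ \ref{prop-CCS}.\ref{Six} to project witnessing worlds onto satisfiable {\CCS}s for the $\Bb$-diamonds, and realize each $\Ba$-diamond through the weak-density $\infty$-{\window} construction, invoking Lemma~\ref{corollary} to collapse to $2$-{\window}s when $4(b)\in L$. Your explicit invariant that $\Sigma$ traces a genuine path of $M$, and your explicit closure of repeated-context branches by the loop rule, merely spell out what the paper leaves implicit in its induction on the number $D(w)$ of pending $\sat$-calls.
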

\begin{proof} 
Since $w$ is L-satisfiable, then $w\neq \{\bot\}$. Hence the result of $\sat(w)$ rely on that of: \\
$\all \{\sat(\chooseCCS(\{\neg\phi\}\cup \Bbm(w))\colon \neg\Bb\phi\in w\}$\\ 
\indent $\algand$\\
    $\all\{\satW(\chooseW(w,\neg\phi,d(w)),\Bam(w),\upw)\colon\neg\phi\in w\}$\\
We proceed by induction on the number $D(w)$ of calls at $\sat$ in the recursion stack: 
\begin{itemize}
    \item Case $d(w)>c.n^4$ (for some $c>0$ given by lemma \ref{limit:recursion:depth}) : then the sets\\
    $\{\sat(\chooseCCS(\{\neg\phi\}\cup \Bbm(w))\colon \neg\Bb\phi\in w\}$ and\\
    $\{\satW(\chooseW(w,\neg\phi,d(w)),\Bam(w),\upw)\colon\neg\phi\in w\}$\\
    are empty. Hence $\sat(w)$ returns \true.
    \item Case $D(w)\geq 1$: for some L-model $M=(W,\Ra,\Rb,v)$ and $x\in W$,
    $M,x\models w$ and
    \begin{enumerate}
        \item since  $M,x\models w$ then for all $\neg\Bb\phi\in w$, $M,x\models \neg\Bb\phi$. Hence for all $\neg\Bb\phi\in w$, there exists $y\in W$ s.th.\ $(x,y)\in\Rb$ and $M,y\models \neg\phi$ and $M,y\models \Bbm(w)$. Thus for all $\neg\Bb\phi\in w$, if $u_0 =\{\neg\phi\}\cup \Bbm(w)$ then $u_0$ is L-satisfiable. Let $w_0=\CSF(u_0)\cap y$, then by Prop.\ \ref{prop-CCS}.\ref{Six} $w_0\in\CCS(u_0)$ and $w_0$ is L-satisfiable too. Thus by IH (since $D(w_0)<D(w)$), for all $\neg\Bb\phi\in w$ there exists $w_0 \in \CCS(u_0)$ such that $\sat(w_0)$ returns \true. Hence $\all\{\sat(\Sigma,\chooseCCS(\{\neg\phi\}\cup \Bbm(w))\colon \neg\Bb\phi\in w\}$ returns \true.
         \item since  $M,x\models w$ then for all $\neg\Ba\phi\in w$, $M,x\models \neg\Ba\phi$. Hence, for all $\neg\Ba\phi\in w$, there exists an infinite sequence $(y_i)_{i\geq 0}$ such that for $0\leq i$:
         \begin{itemize}
             \item $(x,y_i)\in\Ra$ 
             \item $(y_{i+1},y_i)\in\Rb$ 
             \item $M,y_0\models \neg\phi$
             \item $M,y_i\models \Bam(w)$ 
             \item $M,y_i\models \Bbm(y_{i+1})$ 
         \end{itemize}
Case $4(b)\not\in L$: let 
         \begin{itemize}
             \item $w_{\upw}=\CSF(\Bam(w))\cap y_{\upw}$
             \item $w_i=\CSF(\Bam(w)\cup\Bbm(w_{i+1}))\cap y_i$ for $0\leq i< \upw$
             \item $w_0=\CSF(\{\neg\phi\}\cup \Bam(w)\cup\Bbm(w_{1}))\cap y_0$
         \end{itemize}        
         By Prop.\ \ref{prop-CCS}.\ref{Six}, these $(w_i)_{0\leq i\leq \upw}$ form a sequence of $L$-satisfiable \CCS\ such that:
         \begin{itemize}
             \item $w_{\upw}\in \CCS(\Bam(w))$
             \item $w_i\in \CCS(\Bam(w)\cup\Bbm(w_{i+1})$ for $1\leq i< \upw$
             \item $w_0\in\CCS(\{\neg\phi\}\cup \Bam(w)\cup\Bbm(w_{1}))$
         \end{itemize}
         Since $D(w_i)<D(w)$ for each $0\leq i\leq \upw$, then, by IH, $\sat(\Sigma',(w_i)$ returns $\true$ for all $0\leq i\leq \upw$. 
         
        Obviously each subsequence $(w_i,\cdots,w_{i+d(w)})$ is a $d(w)$-{\window} for $w$ and  $(w_{i+1},\cdots,w_{i+d(w)+1})$ is a continuation of it.  Thus for each $\neg\Ba\phi\in w$ the call 
        $\satW(\Sigma,\chooseW(w,\neg\phi),\Bam(w),\upw)$
        will reduce to returning:
\[\sat(\Sigma,w_0) \mbox{ \algand\ }\sat(\Sigma_1,w_1) \mbox{ \algand\ } \ldots \mbox{ \algand\ } \sat(\Sigma_{\upw},w_{\upw})\]
         which is \true.\\
Case $4(b)\in L$: similarly with by $w_i=\CSF(\Bam(w)\cup\Bbm(w_{i+1}))\cap y_i$ for $0\leq i$, we obtain an $\infty$-\window\ for $w$, hence by lemma \ref{corollary}, there exists a $2$-\window\ $(w_0,w_1)$ for $w$, and the call reduces to returning: $\sat(\Sigma_0,w_0) \mbox{ \algand\ }\sat(\Sigma_{1}w_{1}$ which is \true\ too. 

\end{enumerate}
\end{itemize}
\end{proof}

For completeness, we proceed by induction on the structure of formulas w.r.t.\ a model explicitly constructed and take into account that $L$-models are not closed under union (since the union of transitive relations is not transitive). \\
Let $\sat(\emptyset,w^+)$ with $w^+\in\CCS(u^+)$ be the initial call. Given some set $u$ and a formula $\psi$, we denote $w^{(u,\psi)}$ the \CCS\ chosen by $\chooseCCS(u,\psi)$ (remark 1: it exists by hypothesis), and if $u$ is an $a$-heir, let $(w^{(u,\psi_i)})_{i\leq 0}$ be a $\infty$-{\window} for $w$ (remark 2: it exists by lemma \ref{corollary}). \\
Let $W$ be the set of all occurrences of the values taken by argument $w$ in calls to $\sat(\Sigma,w)$ (we do not use a pointer structure for sets of formulas, each occurrence is distinct from the other even if they contain the same formulas). Let $\Ra'$ and $\Rb'$ be the smallest relations on $W$ defined as follows:\\
for all $w\in W$ and for all $u=\Bbm(w)$ and $\Bb\neg\psi\in w$, 
\begin{itemize}
    \item if $(u,\psi)\not\in\Sigma$ then $(w,w^{(u,\psi)})\in\Rb'$
    \item else, let $(u,\psi)=(u',\psi')\in\Sigma$ and $u'$ of the same type as $u$, then $(w,w^{(u',\psi)})\in\Rb'$ (backward loop; note that $w^{(u,\psi)}\not\in W$)
\end{itemize} 
and for all $u=\Bam(w)$ and $\Ba\neg\psi\in w$
\begin{itemize}
    \item if $(u,\psi)\not\in\Sigma$ then $(w,w^{(u,\psi)})\in\Ra'$,  and in addition for $i\geq 0\colon (w,w_i^{(u,\psi)})\in\Rb'$ and $(w_{i+1}^{(u,\psi)},w_i^{(u,\psi)})\in\Ra'$
    \item else, let $(u,\psi)=(u',\psi')\in\Sigma$ and $u'$ of  same type as $u$, then $(w,w^{(u',\psi')})\in\Ra'$  and in addition for $i\geq 0\colon (w,w_i^{(u',\psi')})\in\Ra'$; note that $(w_{i+1}^{(u',\psi)},w_i^{(u',\psi')})$ is already in $\Ra'$.\\
\end{itemize} 
Finally, let $\Ra=(\Ra')^+$ (the transitive closure of $\Ra'$), and if $4(b)\in L\colon \Rb=(\Rb')^+$, and $V(p)=\bigcup_{w\in W} V_w(p)$ for all $p\in\At$. \\
\begin{lemma}
    Let $M=(W,\Ra,\Rb,V)$ as defined above, $M$ is an $L$-model. 
\end{lemma}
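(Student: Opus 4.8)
The plan is to verify, one at a time, the three conditions that make a frame an $L$-frame: transitivity of $\Ra$, transitivity of $\Rb$ (required only when $4(b)\in L$), and weak density. The two transitivity requirements are immediate from the construction, since $\Ra=(\Ra')^+$ and, in the case $4(b)\in L$, $\Rb=(\Rb')^+$ are defined as transitive closures, while when $4(b)\notin L$ no condition is imposed on $\Rb$. Nonemptiness of $W$ (it contains the root $w^+$) and the fact that $V$ is a valuation are trivial. Hence the entire substance of the lemma is weak density, and that is where I would concentrate.

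For weak density I would argue in two stages, first proving a \emph{local} version for single $\Ra'$-edges. The claim is: for every $(s,t)\in\Ra'$ there is $u\in W$ with $(s,u)\in\Ra'$ and $(u,t)\in\Rb'$. Every $\Ra'$-edge is created when some call $\sat(\Sigma,s)$ processes a formula $\neg\Ba\psi\in s$, and then $t$ is a dense-successor $w_i$ of the \window\ attached to that call --- either the freshly chosen \window\ (when $(u,\psi)\notin\Sigma$) or, in the loop-back case, the \window\ associated with the matching context $(u',\psi')\in\Sigma$. In both situations the defining clauses put $(s,w_{i+1})\in\Ra'$ together with the $b$-step $(w_{i+1},w_i)\in\Rb'$ of the \window, so $u:=w_{i+1}$ witnesses the claim; when $4(b)\in L$ the \window\ is a $2$-\window\ and one uses the repeated successor together with its $\Rb'$-self-loop to cover the edge entering $w_1$.

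Second, I would lift the local claim to $\Ra=(\Ra')^+$. Let $(s,t)\in\Ra$ be witnessed by a path $s=x_0\,\Ra'\,x_1\,\Ra'\cdots\Ra'\,x_n=t$ with $n\ge 1$. Applying the local claim to the final edge $(x_{n-1},t)$ yields $u$ with $(x_{n-1},u)\in\Ra'$ and $(u,t)\in\Rb'\subseteq\Rb$. If $n=1$ then $s=x_0=x_{n-1}$, so $(s,u)\in\Ra'\subseteq\Ra$ directly; if $n\ge 2$ then $s\,\Ra\,x_{n-1}$ along the prefix and $(x_{n-1},u)\in\Ra'\subseteq\Ra$, whence $s\,\Ra\,u$ by transitivity. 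In both cases $(s,u)\in\Ra$ and $(u,t)\in\Rb$, which is exactly weak density for $(s,t)$. Combining this with the two transitivity facts shows that the underlying frame meets every condition imposed by $L$, so $M$ is an $L$-model.

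The main obstacle is the local claim, and within it the loop-back case: one must check that when an $\Ra'$-edge is redirected to a \window\ built earlier at the matching context $(u',\psi')$, the $b$-steps among that \window's successors are already present in $\Rb'$, so that the weak-density witness survives the identification of worlds. The argument therefore turns on careful bookkeeping of which \window\ each $\Ra'$-edge belongs to, together with the separate treatment of the $2$-\window\ and its $\Rb'$-self-loop in the case $4(b)\in L$.
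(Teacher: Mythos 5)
Your proof is correct and follows essentially the same route as the paper's: transitivity is immediate from the transitive-closure construction, and weak density is obtained by decomposing an $\Ra$-pair at its last $\Ra'$-edge, observing that the target of any $\Ra'$-edge is a dense-successor $w_k$ of some \window\ (in both the fresh and the loop-back cases), taking $w_{k+1}$ as the witness via the \window's internal $b$-step, and closing the argument with transitivity of $\Ra$. Your explicit separation into a local claim on $\Ra'$-edges plus a lifting step is just a cleaner packaging of the paper's inline argument.
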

\begin{proof}
    We just have to check the weak-density condition since, by construction, $\Ra$ and $\Rb$ are transitive. Let $(w_i,w_j)\in\Ra$, i.e.\ $\in(\Ra')^+$ then either $(w_i,w_j)\in\Ra'$ or for some $w_{j-1}\colon (w_i,w_{j-1})\in\Ra$ and $(w_{j-1},w_j)\in\Ra'$. In both cases, $w_j$ is $w_k^{(u_i,\psi_i)}$, a member of a $\infty$-{\window} for $w_i$ (or for $w_{j_1}$), hence still by construction there exists $w_{k+1}^{(u_i,\psi_i)}$ such that $(w_i,w_{k+1}^{(u_i,\psi_i)})\in\Ra$ (or $(w_{j-1},w_{k+1}^{(u_i,\psi_i)})\in\Ra$) and $(w_{k+1}^{(u_i,\psi_i)},w_k^{(u_i,\psi_i)})\in\Rb$. We are done in the first case, and in the second one, we conclude by observing that $(w_i,w_{k+1}^{(u_i,\psi_i)})$ by transitivity, and we are done too.   \\
\end{proof}
\begin{lemma}
    For any $w=\CCS(u)\in W$ for some $u$, $\phi\in w$ iff $M,w\models\phi$. 
\end{lemma}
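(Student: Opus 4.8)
The plan is to prove the statement by a single structural induction on $\phi$ (with $\lgx{\phi}$ as the measure), uniformly over all $w\in W$. Since each $w\in W$ is a $\CCS$, hence a consistent, classically saturated set, the substance is the forward implication: for every $w\in W$ and every $\phi$, $\phi\in w$ implies $M,w\models\phi$ (the converse direction of the ``iff'' follows, on the subformulas occurring along the run, from the consistency of $w$ together with the saturation clauses of $\CCS$). The atomic case is immediate from $V(p)=\bigcup_{w\in W}V_w(p)$, which gives $w\in V(p)$ iff $p\in w$; the Boolean cases ($\phi\wedge\psi$, $\neg(\phi\wedge\psi)$, $\neg\neg\psi$) are routine applications of the defining clauses of $\CCS$ together with the induction hypothesis.

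Before the modal cases I would isolate two \emph{propagation facts}, proved by direct inspection of the construction of $\Ra'$ and $\Rb'$:
\begin{itemize}
\item for every edge $(x,y)\in\Ra'$, $\Bam(x)\subseteq y$;
\item for every edge $(x,y)\in\Rb'$, $\Bbm(x)\subseteq y$.
\end{itemize}
For $\Ra'$ this is checked over the three ways an $a$-edge is created: a non-loop heir edge $x\,\Ra'\,x^{(u,\psi)}$ with $u=\Bam(x)$ gives $\Bam(x)\subseteq x^{(u,\psi)}$ since $x^{(u,\psi)}\in\CCS(\Bam(x)\cup\{\neg\psi\})$; a window edge $x\,\Ra'\,w_i$ gives $\Bam(x)\subseteq w_i$ since $w_i\in\CCS(\Bam(x)\cup\Bbm(w_{i+1}))$; and a backward-loop edge reuses a recorded context $(u',\psi')=(u,\psi)$ with $u'=\Bam(x)$, so its target again contains $\Bam(x)$. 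The $\Rb'$ fact is verified the same way, the only extra point being the internal window edges $w_{i+1}\,\Rb'\,w_i$, for which $\Bbm(w_{i+1})\subseteq w_i$ holds because $w_i\in\CCS(\Bam(w)\cup\Bbm(w_{i+1}))$ by the very definition of a {\window}.

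With these facts the diamond cases are easy: if $\neg\Ba\psi\in w$ then $\chooseW$ produced an $\infty$-{\window} whose head $w_0$ satisfies $\neg\psi\in w_0$ and $w\,\Ra'\,w_0\subseteq\Ra$ (in the loop case the edge points to the head of the reused window, which still contains $\neg\psi$); the induction hypothesis on $\neg\psi$ gives $M,w_0\models\neg\psi$, hence $M,w\models\neg\Ba\psi$, and symmetrically for $\neg\Bb\psi$ using the successor $w^{(\Bbm(w),\psi)}$. The box cases are the heart of the argument. For $\Bb\psi\in w$ with $4(b)\not\in L$ we have $\Rb=\Rb'$, so the $\Rb'$ propagation fact gives $\psi\in\Bbm(w)\subseteq v$ for each $\Rb$-successor $v$, and the induction hypothesis closes the case in one step. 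The delicate case is $\Ba\psi\in w$ (and, when $4(b)\in L$, $\Bb\psi\in w$), because $\Ra=(\Ra')^{+}$: an $\Ra$-successor $v$ is reached only through a chain $w=x_0\,\Ra'\,x_1\,\Ra'\cdots\Ra'\,x_n=v$. Here I would run an inner induction on $n$, using that $4(a)\in L$ forces $\{\psi,\Ba\psi\}\subseteq\Bam(x_i)$ whenever $\Ba\psi\in x_i$; combined with the first propagation fact ($\Bam(x_i)\subseteq x_{i+1}$) this shows $\Ba\psi\in x_i$ for all $i$ and $\psi\in x_i$ for $i\ge1$, so $\psi\in v$ and the outer induction hypothesis yields $M,v\models\psi$. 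As $v$ was arbitrary, $M,w\models\Ba\psi$; the case $\Bb\psi\in w$ with $4(b)\in L$ is identical with $b$ in place of $a$.

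The main obstacle is precisely this transitive-closure box case: one must show that a boxed formula ``survives'' along arbitrarily long $\Ra'$-chains, and the only leverage is the transitivity axiom $4(a)$, which is exactly what makes $\Ba\psi$ reappear in $\Bam(x_i)$ and therefore in every immediate $\Ra'$-successor. The two propagation facts do all the work, so the care lies in verifying them uniformly over the construction — in particular making sure the backward-loop edges, whose targets are nodes created earlier in the run rather than freshly, still satisfy $\Bam(x)\subseteq y$ (resp.\ $\Bbm(x)\subseteq y$); this is guaranteed by the equality of contexts $(u,\psi)=(u',\psi')$ recorded in $\Sigma$. A final routine point, used only for the converse of the ``iff'', is that each $w\in W$ decides the subformulas occurring in the run, so that $M,w\models\phi$ together with the consistency of $w$ forces $\phi\in w$.
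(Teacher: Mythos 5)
Your proof is correct and takes essentially the same route as the paper's: a structural induction on $\phi$ where the atomic and diamond cases are read directly off the construction of $\Ra'$ and $\Rb'$, and the box cases rest on propagation of $\Bam$ (resp.\ $\Bbm$) along edges. The only difference is one of detail: where the paper's $\Ba\psi$ case merely says ``the reader may verify that in any case $\Bam(w)\subseteq w'$'', you actually carry out that verification, via your two propagation facts and the inner induction on $(\Ra')^{+}$-chains, exploiting that $4(a)\in L$ puts $\Ba\psi$ itself into $\Bam$ so that boxed formulas survive along arbitrarily long chains.
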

\begin{proof} By induction of the structure of $\phi$, we only treat the modal and atomic cases. 
\begin{itemize}
    \item if $\phi=p\in\At$ for some $p$, then $p\in w$ and by definition $w\in V_w(p)$, hence $M,w\models \phi$
    \item if $\phi=\neg\Bb\psi$ then with $u=\Bbm(w)$, we have $w^{(u,\neg\phi)}\in W$ (cf.\ remark 1), $(w,w^{(u,\neg\phi)})\in\Rb$, and $\neg\phi\in w^{(u,\neg\phi)}$, hence by IH $M,w^{(u,\neg\phi)}\models \neg\phi$, hence $M,w\models \phi$
    \item if $\phi=\neg\Ba\psi$ then with $u=\Bbm(w)$ and $(w_k^{(u,\neg\phi)})_{k\leq 0}$ a $\infty$-{\window}  for $w$ (cf.\ remark 2) such that $\neg\psi\in w_0^{(u,\neg\phi)}$, we have $w_0^{(u,\neg\phi)}\in W$, $(w,w_0^{(u,\neg\phi)})\in\Rb$, and $\neg\phi\in w_0^{(u,\neg\phi)}$, hence by IH $M,w_0^{(u,\neg\phi)}\models \neg\phi$, hence $M,w\models \phi$
    \item if $\phi=\Ba\psi$, let $(w,w')\in\Ra$, the reader may verify that in any case, $\Bam(w)\subseteq w'$, thus $\psi\in w'$ and by IH, $M,w'\models \psi$, hence $M,w\models \phi$. Similarly for $\Bb\psi$. 
    \end{itemize}
\end{proof}
The attentive reader will notice that since $\infty$-{\window}s when $4(b)\in L$ are of the form $(w_0,w_1,w_1,\cdots)$, then in fact instead of constructing a model with infinitely identical copies of the same world, we could as well have added only one copy of $w_1$ and a unique reflexive edge $(w_1,w_1)$ to $\Rb$. The model would be simpler but the proof a little more complex, with no change in the complexity. 

\begin{lemma}
    $Sat(\Sigma,w)$ runs in polynomial space w.r.t.\ $\lgw$. 
\end{lemma}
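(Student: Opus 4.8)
The plan is to transfer the space analysis of the non-transitive case (the previous lemma), the only new ingredient being that the recursion depth of $\sat$ is now controlled by the loop-detecting stack $\Sigma$ rather than by a strictly decreasing degree. As there, I first note that \all\ and \algand\ are evaluated lazily, so that at any instant only one conjunct is being developed. Every set of formulas handled by $\sat$, $\satW$, $\chooseW$ or $\nextW$ is a $\CCS$ of a subset of $\SF(u^+)$, where $u^+$ is the initial input; since $\SF(u^+)$ has size $\bigo(\lgw)$ and taking a $\CCS$ inflates the size only by the constant factor $c_0$, each such set has size $\bigo(\lgw)$ (I write $\lgw$ for the size of the input). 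Consequently each $2$- or $d(u)$-\window\ has size $\bigo(\lgw^2)$, each stack context $(u,\neg\psi)$ has size $\bigo(\lgw)$, and $\chooseCCS$, $\chooseW$ and $\nextW$ each run in space $\bigo(\lgw^2)$.

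I would then bound the global stack. Since $\Sigma$ is maintained depth-first---one context is pushed on each descent into $\sat$ and popped on return---its length at any instant equals the number of nested $\sat$-frames on the current branch. By Lemma~\ref{limit:recursion:depth}, a branch cannot carry more than $\bigo(\lgw^4)$ contexts before some context repeats between heirs of the same type; at the step where the repeated context is pushed, the entry guard $last(\Sigma)\not\in\Sigma$ fires and the branch is cut. Hence the recursion depth of $\sat$ is $\bigo(\lgw^4)$ and $\lgx{\Sigma}=\bigo(\lgw^5)$.

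For the local working space I reuse the $\satW$ estimate verbatim: because $\satW$ releases the current \window\ before moving to its continuation, $space(\satW(\Sigma,T_0,w,N))\leq\lgx{T_0}+\bigo(\lgw)+space(\sat(\Sigma',\wpp))$, where $\wpp$ is the recursive argument of maximal cost; combined with the fact that $\sat$ keeps only its own argument in memory while ranging over its $\Diamond$-formulas, this yields $space(\sat(\Sigma,w))\leq\bigo(\lgw^2)+space(\sat(\Sigma',\wpp))$ per level. The sole departure from the non-transitive case is that this recurrence can no longer be unfolded along a strictly decreasing degree; instead I unfold it along a branch of the recursion tree, whose length is $\bigo(\lgw^4)$ by the preceding paragraph. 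Summing the $\bigo(\lgw^2)$ per level over $\bigo(\lgw^4)$ levels gives $\bigo(\lgw^6)$ for the active frames, and adding the $\bigo(\lgw^5)$ shared stack leaves a total of $\bigo(\lgw^6)$, which is polynomial in $\lgw$.

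The step I expect to demand the most care is the depth bound. In the non-transitive algorithm the degree of the argument strictly decreases at each $\sat$-call, giving depth $\leq\lgw$ for free; under transitivity $\Bam$ and $\Bbm$ retain boxed formulas, so the degree may stagnate and termination rests entirely on loop detection. The whole polynomiality therefore hinges on invoking Lemma~\ref{limit:recursion:depth} to cap the branch length and on verifying that the guard $last(\Sigma)\not\in\Sigma$ indeed cuts every branch no later than the first same-type context repetition, so that no branch exceeds the $\bigo(\lgw^4)$ bound.
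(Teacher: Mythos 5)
Your proof is correct and follows essentially the same route as the paper's: bound the per-level working space by $\bigo(\lgw^2)$ exactly as in the non-transitive case, bound the recursion depth of $\sat$ via the loop check against $\Sigma$ using Lemma~\ref{limit:recursion:depth}, and add the size of the global stack. The only difference is quantitative: you read the $\bigo(\lgw^4)$ of that lemma as the recursion depth (giving stack size $\bigo(\lgw^5)$ and total $\bigo(\lgw^6)$), whereas the paper uses the sharper depth bound $\bigo(\lgw^3)$ established inside that lemma's proof (with $\bigo(\lgw^4)$ stack size) to get $\bigo(\lgw^5)$ --- both are polynomial, so your argument fully establishes the statement.
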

\begin{proof}
    About the complexity of $\satW(W,w,B)$ the reasoning is exactly the same as for $\Kdeab$ (and even simpler if $4(b)\in L$). \\
        Now, concerning calls $\sat(\Sigma,w)$, as already said, $\Sigma$ is implemented as a global stack of size $\lgw^4$, and the maximal recursion depth is $2.\lgw^3$. But it still keeps track of its argument in memory during recursion in order to range over its $\Diamond$-formulas. Thus, if we omit $\Sigma$:\\
    $space(\sat(\Sigma,w))$\\
$\begin{array}{ll}
    &  \leq \lgw+\max \{space(\sat(\wpp)),c'.\lgw^2+space(\sat(\wpp))\}\\
    & \leq (c'+1).\lgw^2+space(\sat(\wpp))\\
    \end{array}$

\noindent Here, $\lgx{\wpp} $ is no more smaller than $\lgw$, but we know the recursion depth is $2.\lgw^3$, hence we have: $space(\Sigma,w)=\lgx{\Sigma}+space'(\Sigma,w,2.\lgw^3)$ with: 
\begin{itemize}
    \item $space'(\Sigma,w,0)=1$
    \item $space'(\Sigma,w,n)=(c'+1).\lgw^2+space'(\Sigma,w,n-1)$
\end{itemize}
 which yields $space(\sat(\Sigma,w))=\lgx{\Sigma}+{\cal O}(\lgw^3)\times {\cal O}(\lgw^2)={\cal O}(\lgw^5)$. 
\end{proof}

\begin{theorem}
    For $L \in\{\Kdeab+4(a), \Kdeab+4(b),\Kdeab+4(a)+4(b)\}$, $DP_{L}$ is $\PSPACE$-complete. 
\end{theorem}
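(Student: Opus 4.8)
The plan is to establish the two matching bounds exactly as in the proof of the analogous statement for $DP_{a,b}$, now drawing on the machinery developed in Section~\ref{extension:to:transitivity}. For hardness I would observe that each $L$ is a conservative extension of a mono-modal logic already known to be $\PSPACE$-hard: restricting attention to formulas built from a single box, the $\square_b$-fragment of $L$ is $\K$ when $4(b)\notin L$ and $\K4$ when $4(b)\in L$. Indeed, given any frame $(W,R_b)$ (with $R_b$ transitive in the $4(b)$ case) we may set $R_a=\emptyset$, whereupon weak-density holds vacuously and $(W,\emptyset,R_b)$ is an $L$-frame; conversely every $L$-frame yields such an $R_b$. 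Since both mono-modal $\K$- and $\K4$-satisfiability are $\PSPACE$-hard by Ladner, so is $L$-satisfiability, and hence $DP_L$.

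For membership I would assemble the three ingredients established above for $L\in\{\KQdeab,\KQQdeab\}$: soundness (Lemma~\ref{soundness2}), completeness via the explicit model $M=(W,\Ra,\Rb,V)$ (the lemma that $M$ is an $L$-model together with the truth lemma $\phi\in w$ iff $M,w\models\phi$), and the space bound ($\sat(\Sigma,w)$ runs in space $\bigo(\lgw^5)$). Together these say that the non-deterministic procedure $\sat(\emptyset,w^+)$ accepts iff the input set is $L$-satisfiable, using only polynomial space. Thus $L$-satisfiability lies in $\NPSPACE=\PSPACE$ by Savitch's theorem; and since a formula is $L$-valid iff its negation is not $L$-satisfiable, $DP_L$ is the complement of an $\NPSPACE$ problem and therefore also in $\PSPACE$, as $\PSPACE$ is closed under complement. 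Combined with hardness, $DP_L$ is $\PSPACE$-complete.

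The hard part is contained entirely in the completeness argument for the transitive cases, which is precisely what distinguishes this theorem from the $\Kdeab$ version. Because $L$-models are not closed under union -- the union of two transitive relations need not be transitive -- the model cannot be glued together freely from the sub-models returned by recursive calls; instead one takes $\Ra=(\Ra')^+$ (and $\Rb=(\Rb')^+$ when $4(b)\in L$) and must re-verify that weak-density survives the transitive closure, which is the content of the lemma showing $M$ is an $L$-model. Simultaneously, the polynomial-space budget has to be maintained even though the contexts $u_i$ no longer grow monotonically along a branch; this is exactly why the context stack $\Sigma$ and the refined loop-bound of Lemma~\ref{limit:recursion:depth} (recursion depth $\bigo(\lgu^4)$) are needed, and why the window size collapses to $2$ when $4(b)\in L$ (Lemma~\ref{corollary}). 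Once those two facts are in hand, the remainder -- termination of the backward loops built from $\Sigma$ and the routine structural induction in the truth lemma -- goes through as in the $\Kdeab$ case.
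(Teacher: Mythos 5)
Your proposal is correct and follows essentially the same route as the paper: $\PSPACE$-hardness via conservativity over $\K$ or $\K4$ (Ladner), and membership by combining the soundness, completeness, and polynomial-space lemmas of Section~\ref{extension:to:transitivity} to place $L$-satisfiability in $\NPSPACE=\PSPACE$, whence $DP_L$ lies in co-$\PSPACE=\PSPACE$. Your explicit justification of conservativity by taking $R_a=\emptyset$ (so that weak-density and $4(a)$ hold vacuously) merely fills in a detail the paper leaves implicit.
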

\begin{proof}
    All these logics are $\PSPACE$-hard since they are all conservative extensions of $K$ or of $K4$, and on the other hand, function $\sat$ can decide non-deterministically and within polynomial space whether a set of formulas is satisfiable, hence satisfiability is in $\NPSPACE$, i.e.\ in $\PSPACE$ (by Savitch' theorem), and so $DP_{L}$ is in co-$\PSPACE$, i.e.\ in $\PSPACE$. 
\end{proof}
\section*{Conclusion}
After having successfully been applied to weak-density alone in \cite{BalGasq25}, the {\window}s approach proves to be useful beyond this case. One may ask whether there is a connection of our {\window}s with so-called mosaics of \cite{Marx00} that were first introduced in \cite{Nemeti95}. In fact, even if {\window}s may be viewed as a kind of overlapping mosaics, membership in $\PSPACE$ is mostly due to this overlapping which is the important feature. Thus, the answer seems rather to be ``yes but''. {\Window}s proves here to be the adequate tool for polynomially examine structures that can serve to build a model. We should be applicable to more open questions of complexity/decidability for logics having similar properties by defining more complex \window\ structures.

\end{document}